\newtheorem{theorem}{Theorem}
\newtheorem{proposition}[theorem]{Proposition}
\newtheorem{lemma}[theorem]{Lemma}
\newtheorem{corollary}[theorem]{Corollary}
\newtheorem{definition}{Definition}
\newtheorem*{remark}{Remark}
\def\RR{\mathbb{R}}
\def\CC{\mathbb{C}}
\def\H{\mathcal{H}}
\def\Sym{\textbf{Sym}}
\def\diag{\textbf{diag}}
\def\Sfive{\mathbf{S}_5}
\def\Ssix{\mathbf{S}_6}
\def\Sfour{\mathbf{S}_4}
\def\Brho{\boldsymbol\rho}
\def\Orho{{\mathbf{\mathcal{O}}}_{\Brho}}
\def\Rtt{\mathbf{R_{2\times3}}}
\DeclareMathOperator{\Tr}{Tr}
\begin{document}

\title{On the spectral dependence of separable and classical correlations in small quantum systems}

\author{Gary McConnell and David Jennings}

\affiliation{Controlled Quantum Dynamics Theory Group, Level 12, EEE,\\ Imperial College London, London SW7 2AZ, United Kingdom}

\email{g.mcconnell@imperial.ac.uk}

\date{\today}

\begin{abstract}
We study the correlation structure of separable and classical states in $2\times2$-~and~$2\times3$-dimensional quantum systems with fixed spectra. Even for such simple systems the maximal correlation - as measured by mutual information - over the set of unitarily accessible separable states is highly non-trivial to compute; however for the $2\times2$ case a particular class of spectra admits full analysis and allows us to contrast classical states with more general separable states. We analyse a particular entropic binary relation on the set of spectra and prove for the qubit-qutrit case that this relation alone picks out a unique classical maximum state for mutual information. Moreover the $2\times3$ case is the largest system with such a property.
\end{abstract}

\maketitle

\tableofcontents

\section{Background: the unitary orbit of a bipartite quantum state}

Let $A$ and $B$ be two quantum systems with states of $A$ represented in $m$-dimensional Hilbert space $\H_A=\CC^m$ and those of $B$ in $\H_B=\CC^n$. We assume that $m,n\geq2$.
Let $\rho=\rho_{AB}$ be any state of the joint system $\H_A\otimes\H_B$, with (real) eigenvalues $\lambda_1, \lambda_2, \ldots, \lambda_{mn}$ summing to 1. Quantum measurements performed on the local subsystems will in general reveal correlations between the two states $\rho_A=\Tr_B\rho_{AB}$ and $\rho_B=\Tr_A\rho_{AB}$, some of which may be attributable to entanglement but others of which could be recreated classically in some sense by preparing the joint system in a probabilistic mixture of known product states - that is to say, in a \emph{separable} state.
Furthermore there is a very small discrete subset of these separable states known as the \emph{classical} states: representable by diagonal matrices in the joint computational basis. The question of the extent to which any correlations are ``genuinely quantum'' is key to the resource-based theories of quantum information and quantum computation currently being developed. Indeed there are also many questions in thermodynamics (see for example \cite{santerdav}, \cite{santerdavPRL} and the references contained therein) which arise from viewing correlation as a resource in nanotechnological applications, where correlations between local states in quantum superpositions are demonstrably more powerful than those for classical states.

There is a compelling question in the middle however - what about the correlations of separable states, which are generally non-classical but also not entangled?  To make this distinction, we may speak of \emph{separable correlations} giving the level of correlation inside joint states which are convex sums of product states (but which will not in general be classical), which usually display a higher degree of correlation than the purely classical states associated to the same spectrum. Similarly, \emph{classical correlations} will refer to correlations within classical states. 

Given a fixed state $\rho$ as above, the set of quantum states with identical spectrum are precisely those obtained from $\rho$ via unitary transformations.
Indeed, we may act upon our state $\rho$ via transformations from the unitary group $\mathcal{U}(mn)$ of degree $mn$, generating the \it unitary orbit \rm $\Orho$ containing all quantum states which are \it reversibly \rm obtainable from our starting state $\rho$. The key thing to note is that in traversing a generic unitary orbit we pass through points with only classical or separable correlations and then through a much larger set of inseparable quantum states whose entanglement is linked to the possibility of yet higher correlations. In other words, even though the spectrum remains constant, we nevertheless create and destroy correlations in the course of traversing the orbit $\Orho$: a statement which highlights the dependence of the notions we are discussing, upon the basis in which we have chosen to represent the states.

So within $\Orho$ we have a natural hierarchy of states with the potential for non-zero correlations, within which we would expect the classical states to be the ``lowest'' in some sense, and the pure entangled states to be the ``highest''. 

\begin{figure}[h!btp]
\begin{center}
\mbox{
\subfigure{\includegraphics[width=6in,height=6in,keepaspectratio]{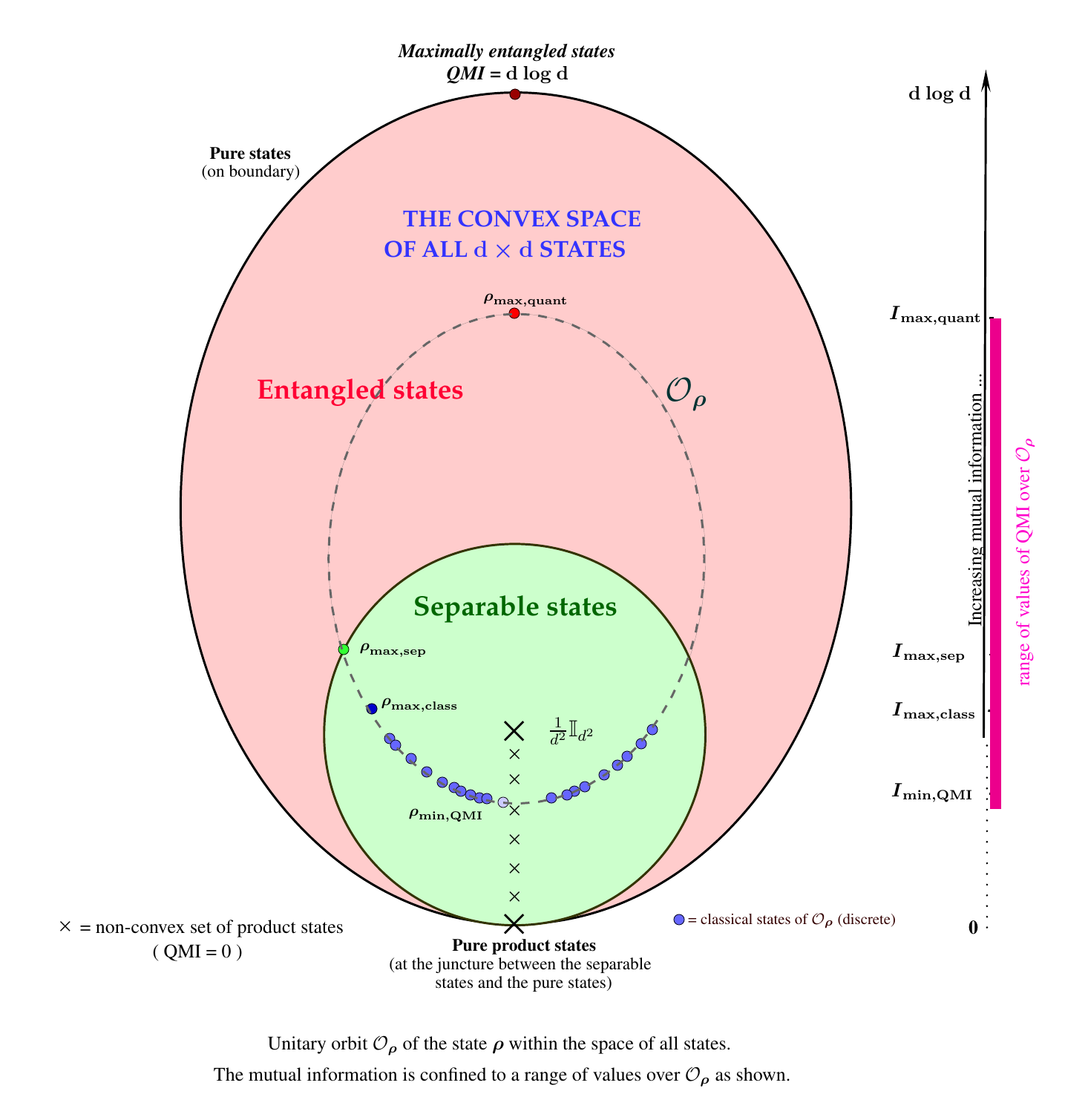}}

}
\end{center}\caption{}\label{egg}
\end{figure}

This setup is depicted schematically in figure~\ref{egg}. The large oval represents the convex set of all states; the inner circle at the bottom is the convex set of separable states, and the entangled states make up the remainder, as in a Venn diagram. Within this the orbit $\Orho$ of a single state is depicted as the boundary of an oval - note that it is not in itself a convex set; however the convex hull of $\Orho$ consists precisely of all of the unitary orbits of spectra which are \it majorised \rm by the spectrum of $\rho$: in particular the maximally mixed state $\frac{1}{d^2}\mathbb{I}_{d^2}$ (which is a unitary orbit consisting of just one point). Quantities which we shall refer to below are noted in the diagram: in particular we must point out that the mutual information (see below) scale on the right is very much a schematic one ... it must be read only in the context of a particular unitary orbit $\Orho$ as shown - otherwise for example the maximally mixed state at the barycentre of the space would always lie ``above'' the minimally correlated state, which is clearly not the case in general.

What is an appropriate measure for this scale? The standard both in quantum and classical information theory is \it mutual information\rm, which loosely speaking measures the distance between a joint state and the product of its reduced subsystems. For any joint system as above we define the quantum mutual information (QMI) to be 
$$I(\rho_{AB}) = S(\rho_A)+S(\rho_B)-S(\rho_{AB}),$$
where for any state $\sigma$ with spectrum $\Lambda_\sigma=\{l_1,\ldots,l_N\}$ we denote by $S(\sigma)=\Tr\left(-\sigma\log\sigma\right)$ its von Neumann entropy. Observe that for a \emph{classical} state the matrix of $\sigma$ will be diagonal in the computational basis and so the definition of von Neumann entropy reduces to the Shannon entropy 
$$H\left((l_1,\ldots,l_N)\right)=\sum_{i=1}^{N}-l_i\log l_i$$
of the probability vector $(l_1,l_2,\ldots,l_N)$. Indeed the definition of QMI then reduces to that of classical mutual information (CMI). However mutual information is not perfect for our purposes, because it does not in any way distinguish between classical and quantum correlations. Indeed it is quite common to have an entangled state with lower mutual information than a classical state: hence the current attempts to define measures which separate quantum correlations from classical ones, such as quantum ``discord'' and quantum ``dissonance''. 
So to get a handle on what sorts of tradeoffs can occur between these quantum and classical correlations, a good starting point is to be able to delimit the maxima and minima of mutual information for each class (i.e.~classical, separable or entangled) of states within a particular orbit. We should also highlight recent work by Partovi~\cite{partovi}, in which he develops a neat, general framework capturing the notion of ``disorder'' in terms of majorisation theory - hence a stronger classification than is provided by entropic measures for example, yielding fewer relations. This approach is in a sense an orthogonal one to ours in that, instead of working with a fixed spectrum and moving over the unitary orbit as we have done below, the \emph{marginal spectra} are fixed, with the total spectrum allowed to vary, revealing what constraints that places upon the possible \emph{minimally disordered} states - be they classical, separable or entangled.

It turns out~\cite{santerdavPRL} that the minimal mutual information within $\Orho$ can always be realised on a classical state, hence \it a fortiori \rm a separable state. Since the classical states form a discrete set it is in principle a straightforward problem to find the minimum (in general it will occur on one of a relatively small suite of permutations identifiable by a simple test - see~\cite{santerdav} - but outside the case $(m,n)=(2,2)$ there is no particular configuration which will be the minimum in all cases). Another way to view this is to remark that being quite a coarse measure, the mutual information is unable to tell us anything about the nature of the state - i.e.~whether it be classical, separable or entangled - for any joint states with sufficiently low correlations.

\begin{remark}
It is important to note that the states we refer to as ``unique'' are only unique up to local unitary operations (and if $m=n$ also the ``transpose'' operation of swapping the two systems $A$ and $B$). Hence we shall tend to refer to unique \emph{classes} of states rather than unique states.
\end{remark}

So we then ask about the maxima. When $m=n$ the maximum overall mutual information occurs for a maximally entangled state~\cite{santerdavPRL} - indeed there will in general be an infinite number of such maximally entangled states yielding the maximal QMI. The cases where $m\neq n$ are messier but the answer is similar.
As with the minimum, a set of conditions can be laid down such that the maximum CMI must occur for a class of states which is one of a relatively small set of candidates; however outside the cases $(m,n)=(2,2)$ and $(2,3)$ this maximum configuration is non-unique. Indeed numerical studies indicate that for $(2,4)$ there are 2 classes which both occur as maxima for different spectra, for $(2,5)$ there are six and for $(3,3)$ there are 18. For comparison we mention that in the case of the minimal CMI, in $(2,2)$ the class is unique (as is the maximum), in $(2,3)$ there are exactly 5 possibilities, and then numerical results indicate that in $(2,4)$ there are 14, in $(2,5)$ there are 42 and in $(3,3)$ there are 18.

Little is known about the \it maximum separable state \rm along a general unitary orbit. The only known way to access it is via convex optimisation using the Peres-Horodecki \it positive partial transpose \rm criterion~\cite{bengtsson}, which outside of the cases $(m,n)=(2,2)$ and $(2,3)$ is only a necessary condition for separability and so does not really assure us of a result anyway. In section~\ref{2bit} we first look at a special class of the $(2,2)$ states where we \it can \rm pin down the maximal separable state, and do some calculations to illustrate its behaviour as contrasted with the maximally and minimally correlated classical states on the same orbit. This is only  achievable because of the neat framework laid out by R.~and~M.~Horodecki~\cite{horodeckis} for understanding the unitary orbit of a two-qubit state. In the $(2,3)$ case we do not have such a framework, and it is consequently much more difficult to understand the big picture.

Section~\ref{3bit} contains the main result of this paper (theorem~\ref{biggun}) where we show that in the case where $(m,n)=(2,3)$, the maximal CMI occurs always (uniquely, up to an action by 12 CMI-invariant transformations) at the state represented by a diagonal matrix containing a fixed ordering of the eigenvalues $\lambda_1,\ldots,\lambda_6$. Curiously this fixed ordering is the same for \emph{every} spectrum, \emph{irrespective of the relative sizes of the eigenvalues}. As we mentioned above, whereas this is also true for $(m,n)=(2,2)$, it is not true for larger joint systems like $(2,4)$ or $(3,3)$.

\newpage
\section{Separable versus classical correlations for the two qubit case}\label{2bit}

We restrict for a moment to the case $(m,n)=(2,2)$. As we shall see, non-trivial features arise even in the simplest possible setting.

So we have a joint system $\H_A\otimes\H_B$ of two qubits in a given state $\rho=\rho_{AB}$ with spectrum $\{a,b,c,d\}$ satisfying $a,b,c,d\geq0$ and $a+b+c+d=1$. There is a representation~\cite{horodeckis} of states $\rho_{AB}$ of such a system in terms of the Pauli matrices, which gives two local reduced Bloch vectors $\mathbf{r}_A,\ \mathbf{r}_B$ at $A$ and $B$, together with a 3-by-3 real  ``correlation matrix'' $\mathbf{T}=(t_{ij})$, giving a total of $3+3+9=15$ real variables parametrising exactly the action of $SU(4)$ on $\rho_{AB}$. Furthermore if we restrict to what they refer to in~\cite{horodeckis} as the \it T-states\rm, namely those states with maximally mixed reductions (hence trivial Bloch vectors but maximal contributions each of $\log2$ to the mutual information) at $A$ and $B$, then by local changes of basis we may arrange that $\mathbf{T}$ is in fact diagonal and so we are reduced to looking in these specific instances at just three real variables $t_{11},\ t_{22}$ and $t_{33}$. Now a natural choice of spanning set for the T-states is the standard Bell basis 
$$|\Phi^+\rangle=\frac{1}{\sqrt{2}}(|00\rangle+|11\rangle),\ |\Phi^-\rangle=\frac{1}{\sqrt{2}}(|00\rangle-|11\rangle),\ |\Psi^+\rangle=\frac{1}{\sqrt{2}}(|01\rangle+|10\rangle)\text{\rm\ and\ }|\Psi^-\rangle=\frac{1}{\sqrt{2}}(|01\rangle-|10\rangle).$$ 
Then from the constraints that $\Tr\rho_{AB}=1$ and that $\rho_{AB}$ be a positive matrix we obtain a tetrahedron $\mathcal{T}$ of $\mathbf{T}$-states with vertices $|\Phi^+\rangle,|\Phi^-\rangle,|\Psi^+\rangle,|\Psi^-\rangle$.

These diagonal matrices may be represented by what they call a \bf t\rm-vector $\mathbf{t}=(t_{11},\ t_{22},\ t_{33})$: the Bell basis elements correspond respectively to the \bf t\rm-vectors $(1,-1,1),\ (-1,1,1),\ (1,1,-1)$ and $(-1,-1,-1)$. In this framework there is a natural way to choose a maximal QMI state~\cite{santerdavPRL} for the given spectrum $\{a,b,c,d\}$ of $\rho_{AB}$: namely, it is the state 
$$\rho_\text{\rm max,QMI} = a|\Phi^+\rangle\langle\Phi^+|+b|\Phi^-\rangle\langle\Phi^-|+c|\Psi^+\rangle\langle\Psi^+|+d|\Psi^-\rangle\langle\Psi^-|,$$
whose \bf t\rm-vector is 
$$\mathbf{t}_\text{\rm max,QMI} = (a-b+c-d,\ -a+b+c-d,\ a+b-c-d).$$
The QMI of this state is
\begin{equation}\label{maxQMI2x2}
I_\text{\rm max,QMI}(a,b,c) = 2\log2-H((a,b,c,d)),
\end{equation}
which is maximal over $\Orho$.
Note also that since $d=1-a-b-c$ we shall view all of these quantities as functions on $\RR^3$ rather than $\RR^4$. The translation from the representation of the states with maximally mixed reductions in the $T$-state picture, back to the eigenvalue-picture is as follows: given a $T$-state vector $\mathbf{t}=(u,v,w)$ with zero local Bloch vectors the corresponding spectrum is
$$(a,b,c) = (\frac{1+u-v+w}{4},\ \frac{1-u+v+w}{4},\ \frac{1+u+v-w}{4})\ .$$
The reason that the T-state setup is so useful for our purposes is that the \it separable \rm states with maximally mixed reduced states and diagonal $\mathbf{T}$-matrix, turn out to be exactly those states whose eigenvalues are all less than or equal to $\frac{1}{2}$. These states trace out an octahedron $\mathfrak{O}$ inside $\mathcal{T}$ which is given in the $T$-coordinate system by $\mathfrak{O}=\mathcal{T}\cap-\mathcal{T}$. Its vertices are $(\pm1,\ 0,\ 0),\ (0,\ \pm1,\ 0),\ (0,\ 0,\ \pm1)$.

Since we are using the joint computational basis and writing things in terms of Pauli matrices, any \it classical \rm state on the orbit $\Orho$ may be written
$$\rho_\text{\rm class}^\tau = \tau(a)|00\rangle\langle00|+\tau(b)|01\rangle\langle01|+\tau(c)|10\rangle\langle10|+\tau(d)|11\rangle\langle11|$$
for some $\tau\in\Sfour$, the symmetric group on four letters. (Here we have arbitrarily allocated the identity element of $\Sfour$ to the state where the eigenvalues $a,b,c,d$ are arranged in alphabetical order down the diagonal which we denote by $\diag(a,b,c,d)$). The local Bloch vectors $\mathbf{r}_A,\ \mathbf{r}_B$ of $\rho_\text{\rm class}^\tau$ are no longer zero in general but rather 
$$\mathbf{r}_A^\tau = (0,\ 0,\ \frac{a+b-c-d}{4}),\ \ \mathbf{r}_B^\tau=(0,\ 0,\ \frac{a-b+c-d}{4}),\ \ \text{\rm with\ $\mathbf{t}$-vector\ }\mathbf{t}^\tau=(0,\ 0,\ \frac{a-b-c+d}{4}).$$

We assume from now on that $a\geq b\geq c\geq d\geq 0$. 
We know from~\cite{santerdavPRL} (or see appendix~\ref{2by2}) that under these conditions the state
\begin{eqnarray*}
\rho_\text{\rm min} & = & \diag(a,b,c,d)
\end{eqnarray*}
will give us the minimal QMI on $\Orho$:
\begin{equation}\label{minQMI2x2}
I_\text{\rm min}(a,b,c) = h(a+b)+h(a+c)-H((a,b,c,d)),
\end{equation}
and that the state
\begin{eqnarray*}
\rho_\text{\rm max,class} & = & \diag(a,d,c,b)
\end{eqnarray*}
corresponding to the permutation $\tau=(2,4)$ will give the maximal CMI on $\Orho\ $: 
\begin{equation}\label{maxCMI2x2}
I_\text{\rm max,class}(a,b,c) = h(a+c)+h(b+c)-H((a,b,c,d)).
\end{equation}
We have used the standard convention in~(\ref{minQMI2x2})~and~(\ref{maxCMI2x2}) that $h$ is the \it binary entropy function \rm 
$$h(x)=-x\log x-(1-x)\log(1-x).$$

Finally we define $I_\text{\rm max,sep}$ to be the maximal QMI attainable on a \bf separable \rm state $\rho_\text{\rm max,sep}$ in the orbit $\Orho$. Note that in general these maximal and minimal states will not be unique; whereas the value of the information can be abstractly uniquely defined.

We wish to analyse the behaviour of the functions~$I_\text{\rm max,QMI}$,~$I_\text{\rm max,sep}$,~$I_\text{\rm min}$~and~$I_\text{\rm max,class}$ as we roam over $\Orho$ for some fixed spectrum $\{a,b,c,d\}$. Whereas $\rho_\text{\rm max,sep}$ is difficult to find for a generic state, for illustrative purposes, we may restrict to the subset of spectra for which $\rho_\text{\rm max,QMI}$ lies inside the octahedron $\mathfrak{O}$ of separable states, for then we are guaranteed that $\rho_\text{\rm max,sep}$ will coincide with $\rho_\text{\rm max,QMI}$, namely those with eigenvalues all less than or equal to $\frac{1}{2}$. Thus by construction,
\begin{equation}\label{entsep}
I_\text{\rm max,QMI} = I_\text{\rm max,sep}
\end{equation}
for all of the states we shall be considering in this section.
Define ``gap'' functions $\gamma_\text{\rm max}$ and $\gamma_\text{\rm min}$ as the differences between the quantity in~(\ref{entsep}), and those in~(\ref{minQMI2x2})~and~(\ref{maxCMI2x2}) respectively:
$$\gamma_\text{\rm max}(a,b,c)=2\log2-h(a+c)-h(b+c)$$
and
$$\gamma_\text{\rm min}(a,b,c)=2\log2-h(a+b)-h(a+c).$$
These represent the gaps in mutual information as we travel over different spectra, between the maximal QMI states and their maximal and minimal counterparts in the classical subset.
Indeed $\gamma_\text{\rm max}$ is a signature function for the non-classicality of the state space: the states we are considering are not entangled; nevertheless they are able to manifest greater mutual information than would a purely classical state with the same spectrum.

For the avoidance of confusion we should point out that by definition, $\gamma_\text{\rm max}\leq\gamma_\text{\rm min}$.

The functions $\gamma_\text{\rm max}$ and $\gamma_\text{\rm min}$ are defined on the domain of spectra:
$$\mathcal{D}^\text{\rm id}=\{(a,b,c)\in\RR^3\boldsymbol{:}\ \frac{1}{2}\geq a\geq b\geq c\geq (1-a-b-c)\geq0\},$$
which is a kind of pyramid with an irregular quadrilateral base. Its five vertices are at the points $V_1 = (\frac{1}{2},\frac{1}{2},0)\ \ $ (which is the apex of the pyramid), $V_2 = (\frac{1}{4},\frac{1}{4},\frac{1}{4})$, $V_3 = (\frac{1}{3},\frac{1}{3},\frac{1}{3})$, $V_4 = (\frac{1}{2},\frac{1}{4},\frac{1}{4})$ and $V_5 = (\frac{1}{2},\frac{1}{6},\frac{1}{6})$.

Note that for each rearrangement $\tau\in\Sfour$ of the positions of the eigenvalues we obtain another domain $\mathcal{D}^\tau$: in total these 24 domains glue together to form an octahedron which is a linear image of the regular octahedron $\mathfrak{O}$ (see the diagrams below). As we cross from one fundamental domain into another the functions $\gamma_\text{\rm max}$ and $\gamma_\text{\rm min}$ will need to be re-defined in order to take into account the new ordering of the eigenvalues. 

We restrict our attention therefore to the behaviour of $\gamma_\text{\rm max},\ \gamma_\text{\rm min}$ on the convex region $\mathcal{D}^\text{\rm id}$. Now $-H$ is a convex function on its domain the unit interval $[0,1]$ and since the maps from $\RR^3$ to $\RR$ given by $(a,b,c)\mapsto (a+b)$, $(a,b,c)\mapsto (a+c)$ and $(a,b,c)\mapsto (b+c)$ are all linear it follows that 
$\gamma_\text{\rm max}$ and $\gamma_\text{\rm min}$ are also convex on $\mathcal{D}^\text{\rm id}$. (Note that these ``gap'' functions will in fact be convex on the whole octahedral domain; however one needs always to rearrange the arguments in the definitions as remarked above). Hence $\gamma_\text{\rm max}$ and $\gamma_\text{\rm min}$ will attain their maximal values on an extremal point of $\mathcal{D}^\text{\rm id}$, which means one or more of the vertices $V_1,V_2,V_3,V_4,V_5$ above. By direct calculation we find that the maximum of $\gamma_\text{\rm max}$ is $\frac{3}{4}\log3-\log2$ (around 0.1308 in the natural logarithm) and it occurs at the point~$V_4$. That is to say, this is the largest possible deviation of mutual information (from the classical values) once one is allowed the full scope of the quantum state space for these particular spectra.

For $\gamma_\text{\rm min}$ the maximum is $\log2$ and it occurs at the point~$V_1$. (Hence upon acting by $\Sfour$ we see that the maximal points for $\gamma_\text{\rm min}$ are actually the six vertices of the octahedron).

We can show directly from the definitions that the \it minimal \rm values of these functions are always zero: 
$$\gamma_\text{\rm max}(a,b,c)=0\text{\rm\ if and only if }a=b=\frac{1}{2}-c=\frac{1}{2}-d;$$
and
$$\gamma_\text{\rm min}(a,b,c)=0\text{\rm\ if and only if }a=b=c=d=\frac{1}{4}\ .$$
Hence $\gamma_\text{\rm max}$ is zero on the line joining $V_1$ and $V_2$; while $\gamma_\text{\rm min}$ is zero only at the point $V_2$ (which represents the maximally mixed state).

On the next few pages we include some depictions of the behaviour of these two functions $\gamma_\text{\rm max},\ \gamma_\text{\rm min}$ on the whole octahedron $\mathfrak{O}$, using the translation above from $(a,b,c)$-space to the $T$-state space. The first picture shows the splitting of the octahedron into the (image of the) fundamental regions $\mathcal{D}^\tau$. Notice that the diagrams are all in ``\bf t\rm-vector space'' and so care should be taken when thinking about probability distributions in terms of the spectra $\{a,b,c,d\}$ to use formulae like those given in the first part of this section in order to pass from the spectrum to the octahedron and vice-versa.

It is worth making a few comments on these results. Firstly, the set of spectra, and their corresponding unitary orbits in state space, do \emph{not} coincide with those unitary orbits lying entirely within the set of separable states. Such orbits correspond to the \emph{absolutely separable states} \cite{KusZyczk} - namely those quantum states for which it is impossible to unitarily generate entanglement. Indeed, it has been shown \cite{Ishizaka, Verstraete} that absolutely separable states have spectra that obey $a \le c+2 \sqrt{bd}$, and so are found to be a proper subset of the spectra that we consider. The implication of this is that orbits exist that contain entangled states, but attain their maximal mutual information on separable states. This can be seen more explicitly by considering the so-called \emph{maximally entangled mixed states} (MEMS), being those states for which it is impossible to unitarily increase a given measure of entanglement.

For the case of two qubits, the MEMS have been found to take the form
\begin{eqnarray}\label{MEMS}
\rho_{\mbox{\tiny MEMS}} &=& a |\psi^-\rangle \langle \psi^-| + b |00\rangle \langle 00| + c |\psi^+ \rangle \langle \psi^+| + d| 11\rangle  \langle 11|,
\end{eqnarray} 
modulo local unitaries, and have maximal concurrence $C_{\mbox{\tiny MEMS}} = \mbox{max}(0,a - c - 2\sqrt{bd})$. Indeed, for a fixed spectrum, the state $\rho_{\mbox{\tiny MEMS}}$ not only maximizes concurrence, but also maximizes the negativity, the relative entropy of entanglement and the entanglement of formation \cite{Verstraete}. It is immediately clear from (\ref{MEMS}) that while $\rho_{\mbox{\tiny MEMS}}$ might maximize entanglement, it generally does \emph{not} have maximally mixed marginals, and cannot be a maximum of the QMI, which is somewhat surprising. Therefore, in the generic case of mixed quantum states there exist competing mechanisms between quantum and classical correlations over the orbit of the state, which stands in contrast with the case of pure quantum states, for which all correlation measures (quantum, classical and total) are simultaneously maximized on the maximally entangled states.

\begin{figure}[h!btp]\label{octa}
\begin{center}
\mbox{
\subfigure{\includegraphics[width=4in,height=4in,keepaspectratio]{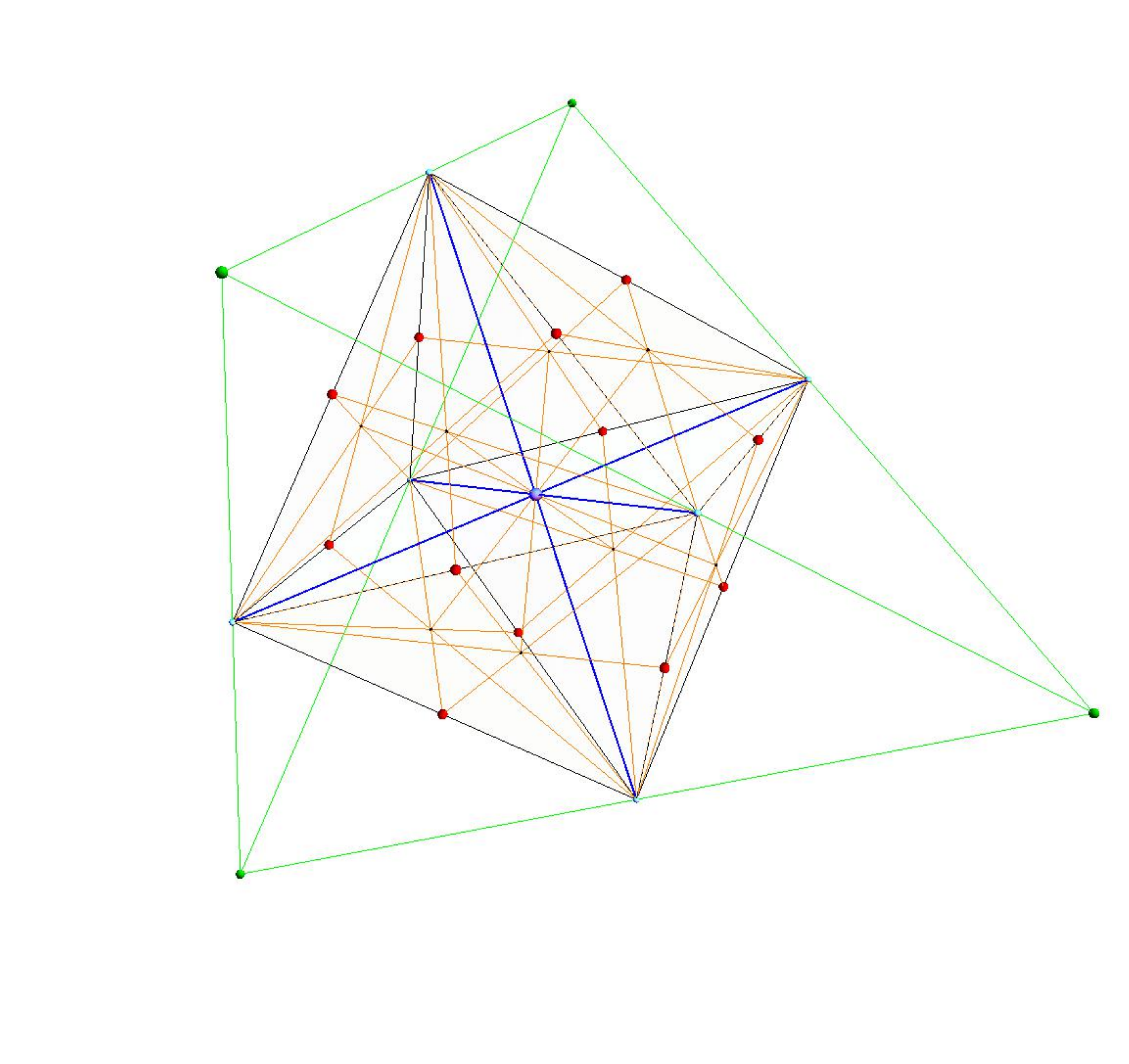}}
}
\caption{The octahedron $\mathfrak{O}$ constituted of the 24 fundamental regions $\cal{D}^\tau$, all sitting inside the green tetrahedron of $T$-states. The red dots are the maxima for $\gamma_\text{\rm max}$; the blue lines are the minima. The central blue dot is the unique maximally mixed state.}
\end{center}
\end{figure}

\begin{figure}[h!btp]\label{fundreg}
\begin{center}
\mbox{
\subfigure{\includegraphics[width=4in,height=3.6in,keepaspectratio]{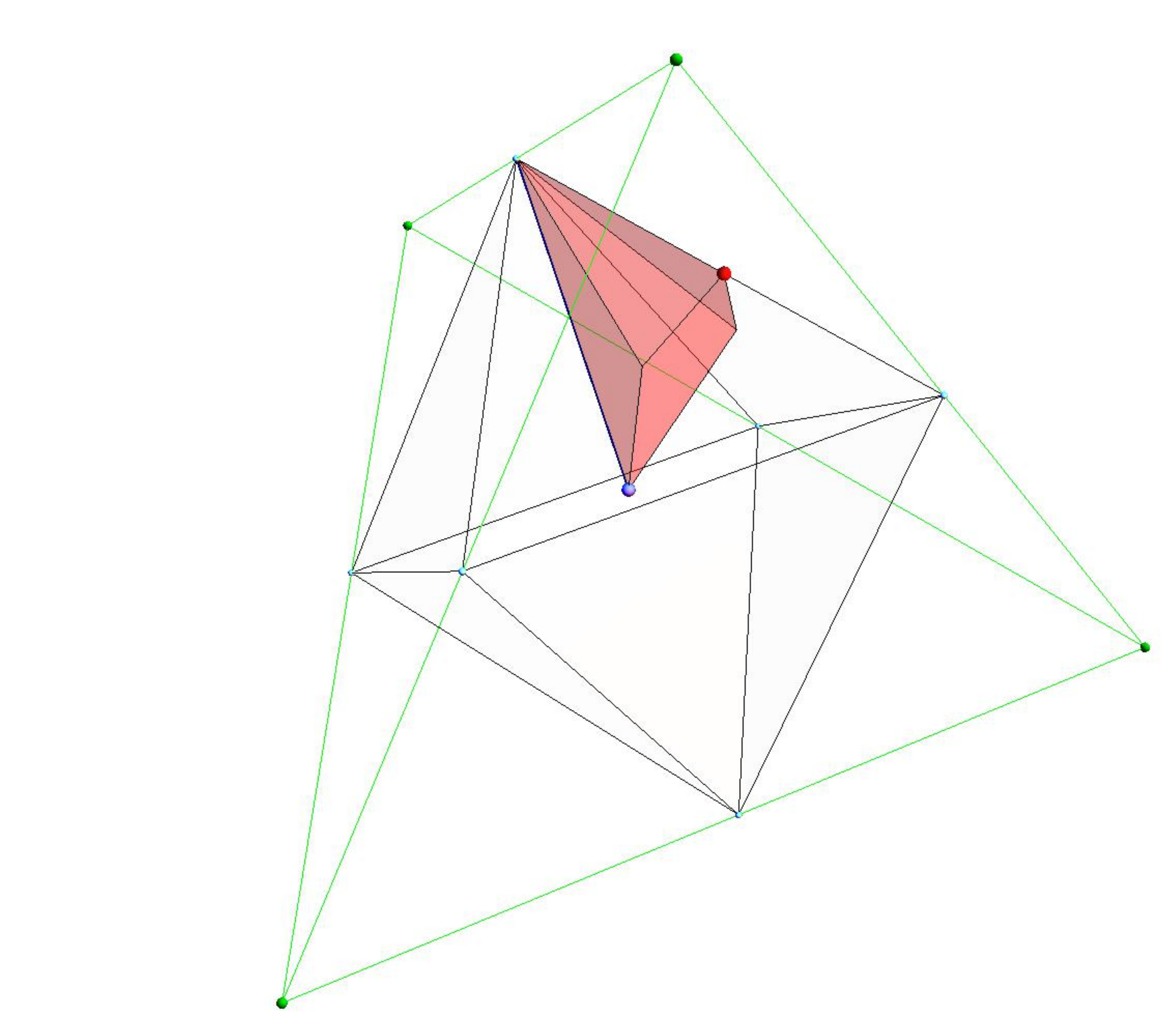}}
}
\caption{A special case of the first diagram: one fundamental region $\cal{D}^\text{\rm id}$ embedded in the octahedron $\mathfrak{O}$, all embedded inside the green tetrahedron of $T$-states whose green dot vertices are the Bell states. The vertices of the octahedron are the classically correlated states like $\frac{1}{2}(|00\rangle\langle00|+|11\rangle\langle11|)$ which all have CMI $=\log2$.}
\end{center}
\end{figure}

\begin{figure}[h!btp]\label{gammax}
\begin{center}
\mbox{
\subfigure{\includegraphics[width=4in,height=3.6in,keepaspectratio]{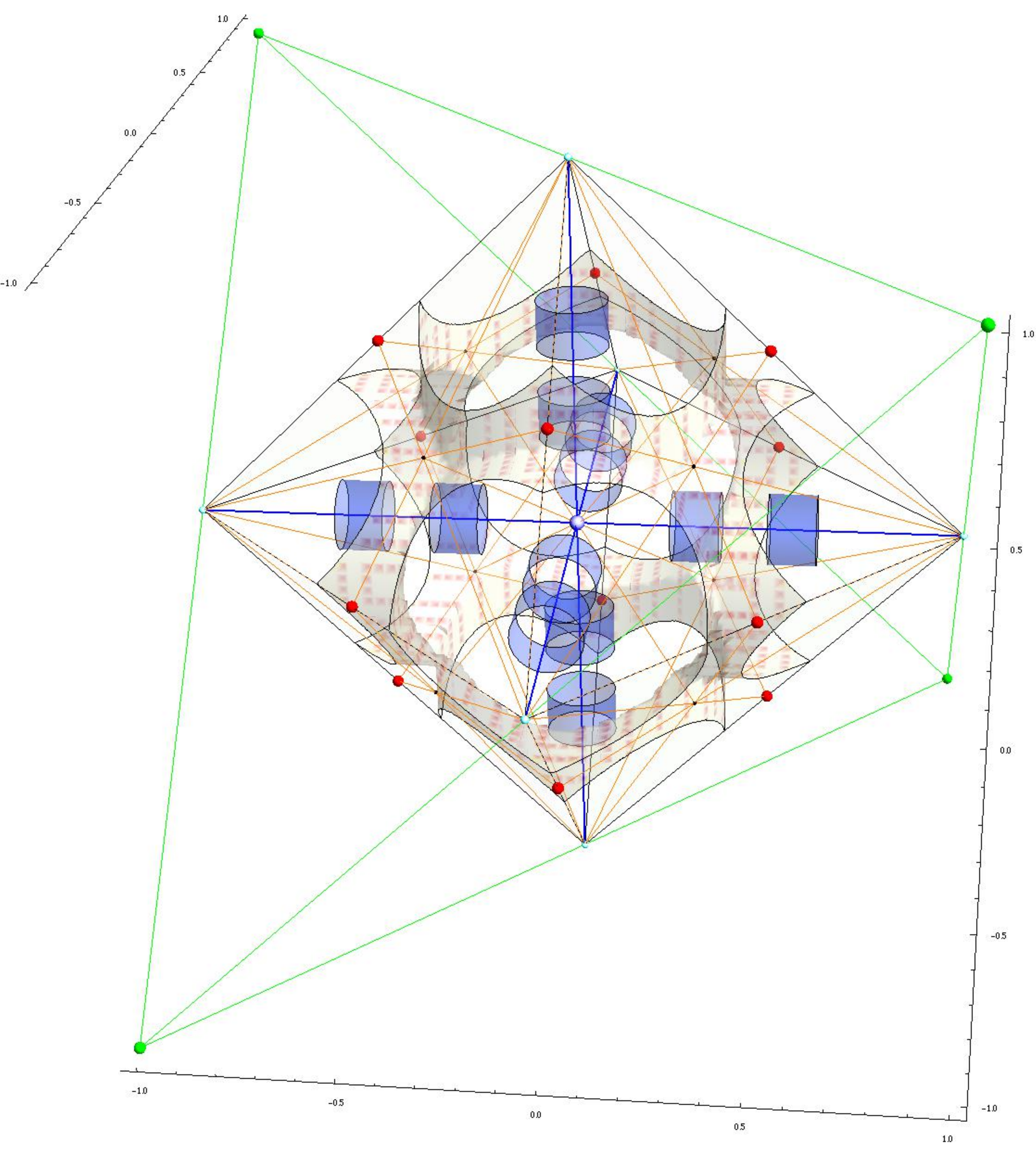}}
}
\caption{The octahedron $\mathfrak{O}$ showing some sections of contours of the function $\gamma_\text{\rm max}$ together with the fundamental regions. $\gamma_\text{\rm max}$ measures the gap between maximal separable and maximal classical correlations for a fixed spectrum; each spectrum is here represented by a single point in $\mathfrak{O}$.}
\end{center}
\end{figure}

\begin{figure}[h!btp]\label{gammax_tempmap}
\begin{center}
\mbox{
\subfigure{\includegraphics[width=4in,height=3.6in,keepaspectratio]{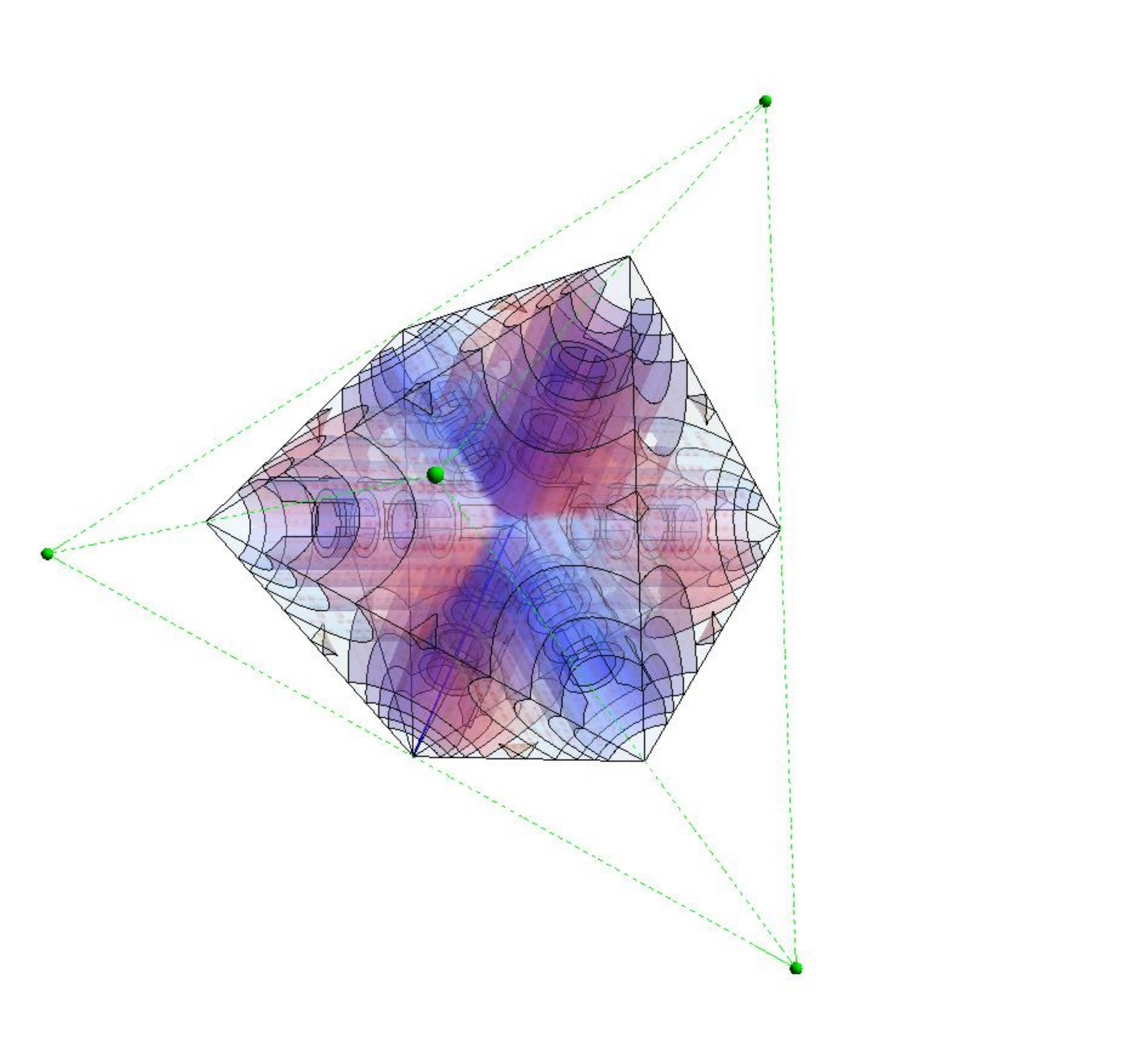}}
}
\caption{Alternative view of the function $\gamma_\text{\rm max}$ with more contours. The colour scheme in both these diagrams goes from blue (low) to red (high): the maximal points are the midpoints of the octahedron's 12 edges; whereas the centre of the cylindrical regions will be the blue lines shown in the fundamental regions above on which $\gamma_\text{\rm max}=0$}
\end{center}
\end{figure}

\begin{figure}[h!btp]\label{gammin}
\begin{center}
\mbox{
\subfigure{\includegraphics[width=4in,height=3.6in,keepaspectratio]{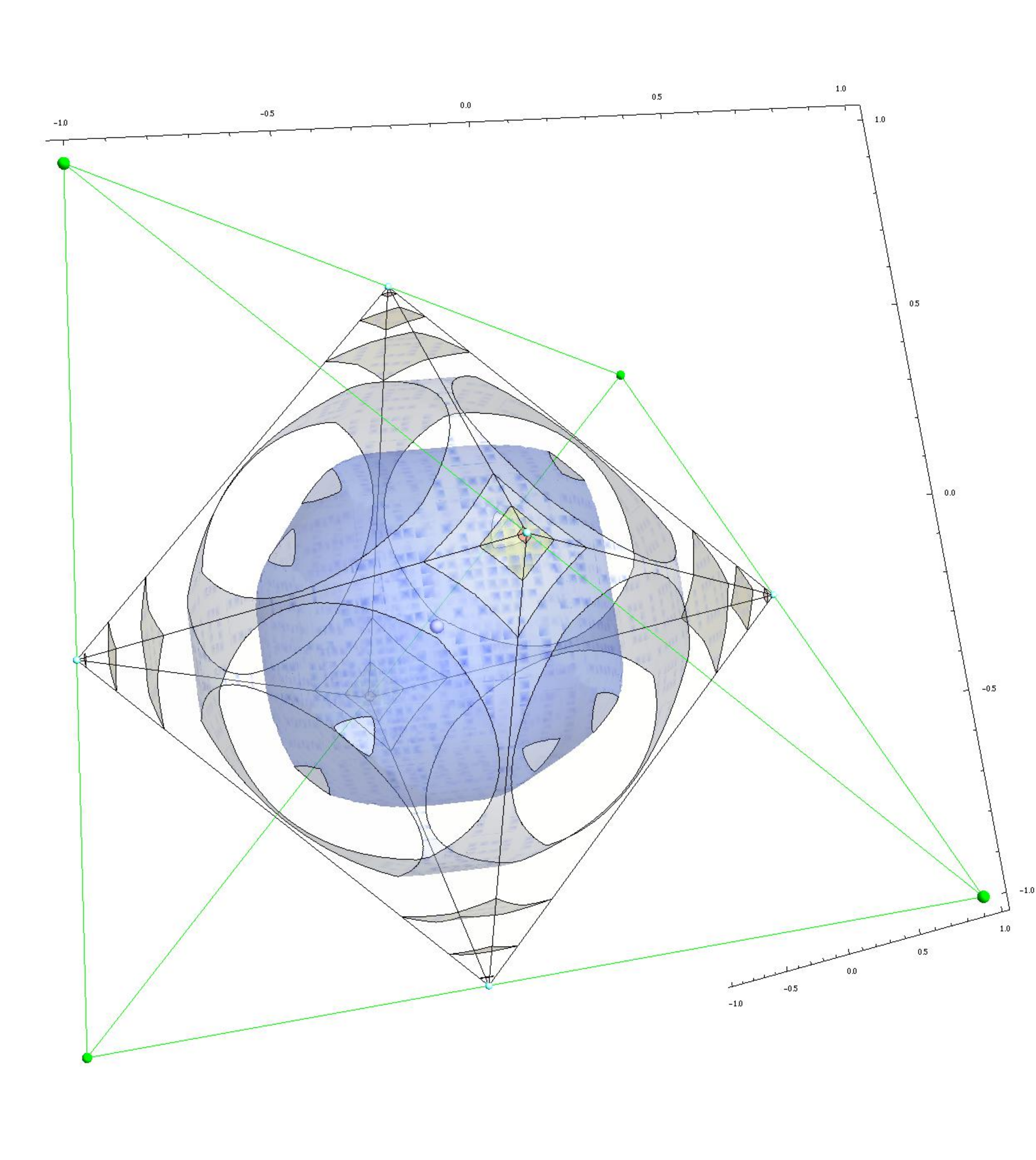}}
}
\caption{The octahedron $\mathfrak{O}$ showing the contours of the function $\gamma_\text{\rm min}$, which measures the full range of separable correlations attainable over a fixed spectrum. Broadly speaking the function increases with distance from the minimum value of zero at the barycentre of the octahedron: its extrema occur at the vertices of $\mathfrak{O}$. The intermediate contours are a kind of truncated cube or cuboctahedron}
\end{center}
\end{figure}

\newpage
% -=-=-=-=-=-=-=-=-=-=-=-=-=-=-=-=-=-=-=-=-=-=-=-=-=-=-=-=-=-=-=-=-=-=-=-=-=-=-=-=-=-=-=-=-=-=-=-=-=-=-=-=-=-=-=-=-=-=-=-=-=-=-=-=

\section{An entropic binary relation for $2\times3$ systems and the unique maximal classical state}\label{3bit}

So we have seen that a $2\times2$ system admits a straightforward analysis of the ``classical gap'' $I_\text{\rm max,class}-I_\text{\rm min,class}$ and that for many spectra (namely those where the eigenvalues are all $\leq\frac{1}{2}$), the ``separable-classical gap'' $\gamma_\text{\rm max}$ is also fairly easy to calculate. However already for the $2\times3$ case it becomes relatively non-trivial even to determine the classical gap. Indeed we make this the focus of this last section and discover a curious property of this $2\times3$ world: namely that $I_\text{\rm max,class}$ is determined solely by an ordering of the eigenvalues and not by their relative sizes. It seems that there is just enough information to pin down the maximum (but again rather surprisingly \emph{not} the minimum~\cite{santerdav}); but increasing either dimension renders this impossible.

We revert for a moment to the general setting of the introduction, in order to fix some ideas.
So let $A$ and $B$ be two quantum systems with states of $A$ represented in $m$-dimensional Hilbert space $\H_A=\CC^m$ and those of $B$ in $\H_B=\CC^n$.
Let $\rho_{AB}$ be any state of the joint system $\H_A\otimes\H_B$, with (real) eigenvalues $\lambda_1, \lambda_2, \ldots, \lambda_{mn}$. We may consider the \emph{classical} state lying in the unitary orbit of $\rho_{AB}$, which may be viewed purely as a diagonal matrix of probabilities summing to~$1$:
\begin{equation}\label{classdiag}
\diag(
\lambda_1, \lambda_2 , \ldots , \lambda_{mn} ).
\end{equation}

As we observed above, by implementing unitaries whose effect is simply to send classical states to classical states - that is, to permute the eigenvalues - we arrive at a series of different possibilities for the subsystems $\rho_A=\Tr_B\rho_{AB}$ and $\rho_B=\Tr_A\rho_{AB}$ obtained by taking the respective partial traces of the joint system. In this setting QMI reduces to CMI: a function which is purely defined in terms of partial sums of the eigenvalues $\{\lambda_k\}$.

Now suppose that we are given an ordering on the set of eigenvalues $\lambda_1>\lambda_2>\ldots>\lambda_{mn}$, but no further information on their relative sizes.
We may then ask the question: is there a particular arrangement of these which will guarantee \it a priori \rm to yield the minimal or maximal values of CMI associated with this entire class of diagonal matrices?

As we mentioned in the first section, the answer in the simplest interesting case $(m,n)=(2,2)$ is that the maximum and the minimum may both be found by considerations of majorisation~\cite{santerdav}, as there are only 3 distinct equivalence classes of matrices under the CMI map. In the next simplest case $(m,n)=(2,3)$ the maximum is determined \it a priori \rm and further there are $5$ ``minimal'' matrices~\cite{santerdavPRL}, one of which will be the minimum in any given instance (and indeed all of which \bf do \rm occur in specific examples, meaning that \it a priori \rm the set of minima cannot be whittled down any further without more stipulations on the relative sizes of eigenvalues). Not surprisingly, beyond these low-dimensional instances nothing terribly definitive can be said because the relative gaps between successive eigenvalues come to play too great a role. Indeed, the real surprise is that a definite maximum occurs in the $2\times 3$-case. This means that the maximal CMI has an easy \it a priori \rm determination in the one-qubit-by-one-qutrit context. 
The remainder of this paper is concerned with exploring the structure of this maximal CMI and establishing this unique maximum configuration.

So let $m=2$, $n=3$.
Throughout this paper when speaking about the $2\times3$-case we shall fix our set of six eigenvalues of the quantum state $\rho=\rho_{AB}$ as $\{a,\ b,\ c,\ d,\ e,\ f\}$ with~$a+b+c+d+e+f=1$ and assume that $a>b>c>d>e>f>0$ 
(we shall usually treat these as though they were \bf strict \rm inequalities in order to derive sharper statements but everything is valid if we allow $\geq$ instead).
The main result is as follows.

\begin{theorem}\label{biggun}
With notation as above, the permutation $[\ a, d, e, f, c, b\ ]$ giving rise to marginal probability vectors $\big(a+f,\ c+d,\ b+e\big)$ and $\big( a+d+e,\ b+c+f\big)$
has maximal CMI among all $720$ possible permutations of $[\ a, b, c, d, e, f \ ]$. 

\bf This is the case \emph{irrespective of the sizes of the gaps between} $\mathbf {a,b,c,d,e,f}$.\rm
\end{theorem}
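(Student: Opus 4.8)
The plan is to reduce the optimization over all $720$ permutations to a manageable combinatorial search, then prove optimality of the candidate $[\,a,d,e,f,c,b\,]$ by a finite sequence of exchange (swap) arguments that each strictly increase CMI, each justified by convexity of $-H$ (equivalently, concavity of Shannon entropy) under a suitable majorization comparison. Write $\sigma = [\,\sigma_1,\dots,\sigma_6\,]$ for a permutation of $(a,b,c,d,e,f)$ placed down the diagonal in the $2\times3$ computational basis, say with the $A$-system ($m=2$) splitting the six slots into two blocks of three: slots $\{1,2,3\}$ and $\{4,5,6\}$. Then the $B$-marginal is $(\sigma_1+\sigma_4,\ \sigma_2+\sigma_5,\ \sigma_3+\sigma_6)$ and the $A$-marginal is $(\sigma_1+\sigma_2+\sigma_3,\ \sigma_4+\sigma_5+\sigma_6)$, and
\[
\mathrm{CMI}(\sigma) = H(\rho_A) + H(\rho_B) - H((a,b,c,d,e,f)).
\]
Since the last term is permutation-invariant, maximizing CMI is equivalent to maximizing $H(\rho_A)+H(\rho_B)$ over the marginal vectors reachable by permutations. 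This immediately kills most of the $720$: the $A$-marginal depends only on the unordered partition of the six values into two triples, and $H(\rho_A)$ is maximized (for a $2$-part distribution) by making the two block-sums as close to $1/2$ as possible; the $B$-marginal depends only on how the three ``columns'' pair up the six values. So I would first enumerate the $\binom{6}{3}/1$-ish inequivalent block-partitions and, within each, the inequivalent column-pairings, collapsing by the stated $12$-element CMI-invariance group (row/column relabelings and the partial transpose), leaving a genuinely small list.

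**The core argument: exchange lemmas.**

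The main work is a pair of swap lemmas. \emph{Lemma A (balancing the $A$-split):} among all ways to split $\{a,\dots,f\}$ into two triples, $H(\rho_A)$ is strictly largest for the split whose two sums straddle $1/2$ most tightly; I claim $\{a,d,e\}$ vs. $\{b,c,f\}$ is forced, because $a$ being the largest must be isolated from $b$ (the second largest) — pairing $a$ with $b,c$ would push that block-sum far above $1/2$ in the regime where the gaps are large, and a short case check on which of the remaining two goes with $a$ shows $\{d,e\}$ beats $\{c,d\}$, $\{c,e\}$, etc., \emph{uniformly in the gaps} (this uniformity is exactly the surprising content, and it comes down to comparing $h(a+d+e)$ against competitors and noting the comparison never flips). \emph{Lemma B (arranging the columns given the split):} with the triples fixed as $\{a,d,e\}$ and $\{b,c,f\}$, among the $3! = 6$ ways to match them into three column-pairs, $H(\rho_B)$ is maximized by the pairing $(a,f),(d,c),(e,b)$, i.e. largest-with-smallest. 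This is the classical ``anti-sorted pairing maximizes entropy of pairwise sums'' phenomenon: pairing $\sigma_{(1)}$ with $\sigma_{(6)}$, $\sigma_{(2)}$ with $\sigma_{(5)}$, $\sigma_{(3)}$ with $\sigma_{(4)}$ makes the three column-sums as equal as possible, hence majorized by any other pairing's sum-vector, hence of maximal Shannon entropy by Schur-concavity. One proves this by the standard adjacent-transposition argument: if some pairing is not anti-sorted, there is a ``crossing'' pair one can uncross to move the sum-vector strictly closer to uniform in the majorization order.

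**Assembling and the obstacle.**

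Putting Lemma A and Lemma B together gives the diagonal $[\,a,d,e,f,c,b\,]$ (reading block one as $a,d,e$ and block two, column-matched, as $f,c,b$) with $B$-marginal $(a+f,\ c+d,\ b+e)$ and $A$-marginal $(a+d+e,\ b+c+f)$, exactly as stated; uniqueness up to the $12$-element invariance group follows because both lemmas were shown to give \emph{strict} inequalities off the orbit of the optimum. The main obstacle — and the place I expect the real effort to go — is proving the \emph{uniformity in the gaps} in Lemma A: Schur-concavity instantly tells you anti-sorting wins \emph{once the block-sums' relative order is known}, but which triple-split straddles $1/2$ most tightly can a priori depend on the actual values, since e.g. $a+d+e$ versus $b+c+f = 1-(a+d+e)$ and $a+c+e$ versus its complement are not comparable by majorization alone. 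The resolution should be that the competing splits are themselves majorization-comparable \emph{to each other} — i.e. $(a+d+e,\ b+c+f)$ is always the most balanced 2-vector among the admissible splits regardless of gaps, because moving the largest element $a$ into either block and then choosing the two companions to minimize $|{\rm block\ sum} - \tfrac12|$ has a gap-independent answer — and this needs a careful but elementary argument comparing the finitely many split-types pairwise, exploiting only $a>b>c>d>e>f>0$ and $\sum = 1$. Once that is nailed down, Lemma B is routine and the theorem follows; I would also remark that the same method applied to $(2,4)$ or $(3,3)$ fails precisely because there the analogue of Lemma A's comparison \emph{does} flip as the gaps vary, matching the non-uniqueness noted in the introduction.
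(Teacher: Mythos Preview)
Your two-lemma decomposition fails because Lemma~A is false: the split $\{a,d,e\}$ versus $\{b,c,f\}$ is \emph{not} always the one that maximizes $H(\rho_A)$. Take for instance $(a,b,c,d,e,f)=(0.40,\,0.30,\,0.15,\,0.10,\,0.04,\,0.01)$: then $a+d+e=0.54$ while $a+d+f=0.51$, so the split $\{a,d,f\}$ versus $\{b,c,e\}$ gives a strictly larger binary entropy for the $A$-marginal. Thus the winning matrix $X$ does \emph{not} maximize $H(\rho_A)$ on its own; it wins only because the gain in $H(\rho_B)$ from the anti-sorted column sums $(a+f,\,b+e,\,c+d)$ outweighs the loss in $H(\rho_A)$. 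You correctly flag this trade-off as ``the main obstacle,'' but your proposed resolution---that $(a+d+e,\,b+c+f)$ is ``always the most balanced $2$-vector among the admissible splits regardless of gaps''---is precisely the false claim, refuted by the same example. A separate-optimization strategy cannot succeed here because the two marginal entropies are genuinely in tension, and no amount of pairwise majorization comparison among the row-splits will salvage it.

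The paper's proof does not attempt to separate the marginals. A preliminary majorization argument (which subsumes your correct Lemma~B) reduces the $60$ equivalence classes to five candidate maxima, no two of which are comparable under majorization. To compare these, the paper introduces a strictly weaker \emph{entropic} binary relation $\rhd$: for a single transposition swapping entries $\alpha>\beta$ in a matrix $P$, one rewrites $I(P^\tau)-I(P)$ in terms of the identric (Lagrangian) mean $\mu_H(x,x+t)=e^{-1}\bigl((x+t)^{x+t}/x^x\bigr)^{1/t}$ associated with $x\mapsto -x\log x$, and then proves a sufficient \emph{a priori} criterion for this difference to be nonnegative. The criterion rests on showing that $x\mapsto\mu_H(x,x+t)$ is positive, increasing and concave, combined with a ``sum-to-product'' lemma: if $p<q<r<s$ with $q+r\ge p+s$ and $\psi$ is positive, concave and increasing, then $\psi(q)\psi(r)\ge\psi(p)\psi(s)$. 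Four applications of this criterion chain the five candidates down to $X$. The essential feature is that each swap alters \emph{both} a row sum and a column sum by the same amount $\alpha-\beta$, and the identric-mean machinery controls the joint effect on CMI---exactly the coupling that your independent treatment of $H(\rho_A)$ and $H(\rho_B)$ cannot capture.
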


In order to prove the theorem we need some preliminary ideas.

\subsection{Definitions: classical mutual information, majorisation and the entropic binary relation $\rhd$}

We set up the framework of the problem for general $m,n$.

\subsubsection{The classical mutual information (CMI) attached to an $m\times n$ probability matrix}

Suppose we are given a matrix $\rho_{AB}$ in the form~(\ref{classdiag}) with a given splitting into a pair of subsystems $A$ of dimension $m$ and $B$ of dimension $n$, so that we may arrange the eigenvalues $\lambda_k$ in an $m\times n$-matrix as follows:
\begin{equation}\label{rowsncols}
\bordermatrix{
&c_1 & c_2 & \ldots & c_n\cr
r_1&\lambda_1 & \lambda_2 & \ldots & \lambda_n \cr
r_2&\lambda_{n+1} & \lambda_{n+2} & \ldots & \lambda_{2n} \cr
\vdots&\vdots & \vdots & \ddots & \vdots \cr
r_m&\lambda_{(m-1)n+1} & \lambda_{(m-1)n+2} & \ldots & \lambda_{mn} \cr
}=P.
\end{equation}
As shown we let the row sums be denoted by $r_i=\sum_{j=1}^n\lambda_{(i-1)n+j}$ for $i=1,\ldots,m$ and similarly for the column sums: $c_j=\sum_{i=1}^m\lambda_{(i-1)n+j}$ for $j=1,\ldots,n$.
Then by the definition of the partial trace map (equivalently, the contraction of a tensor along a particular index) we see that the density matrices $\rho_A$ and $\rho_B$ referred to above are now the diagonal matrices $\diag(r_1,\ldots,r_m)$ and $\diag(c_1,\ldots,c_n)$ respectively.

So $P$ has the form of a joint probability matrix where the marginal probabilities are given by the $r_i$ and the $c_j$. To define the classical mutual information (see~\cite{cover}, \S2.3) we take the sum of the entropies of the $r_i$ and the $c_j$ over all $i,j$ and then subtract the sum of the individual entropies of the $\lambda_k$, for $k=1,\ldots,mn$. Formally:

\begin{definition}\label{cmile}
With notation as above, the classical mutual information $I(P)$ of the matrix $P$ is given by
\begin{equation}\label{CMIdef}
I(P) = \sum_{i=1}^m-r_i\log r_i + \sum_{j=1}^n-c_j\log c_j - \sum_{k=1}^{mn}-\lambda_k\log\lambda_k.
\end{equation}
We will often write $H(x)=-x\log x$ for $x\in [0,1]$ and so we may rewrite~(\ref{CMIdef}) as
\begin{equation*}
I(P) = \sum_{i=1}^m H(r_i) + \sum_{j=1}^n H(c_j) - \sum_{k=1}^{mn}H(\lambda_k).
\end{equation*}
\end{definition}

\subsubsection{Majorisation between two $m\times n$ probability matrices}

For definitions and basic results connected with majorisation, see~\cite{bhatia} and~\cite{marshall}. We shall use the standard symbol $\succ$ to denote majorisation.
For any $m\times n$-matrix $M$ denote by $\mathbf{r}(M)\in\RR^m$ the vector of marginal probabilities represented by the sums of the rows of $M$ and similarly by $\mathbf{c}(M)\in\RR^n$ the vector of marginal probabilities created from the sums of the columns of $M$.

\begin{lemma}\label{majoris}
Let $M_1,M_2$ be two probability matrices.
If $\mathbf{r}(M_1)\succ\mathbf{r}(M_2)$ and if $\mathbf{c}(M_1)\succ\mathbf{c}(M_2)$, then
$$I(M_1)\leq I(M_2).$$
\end{lemma}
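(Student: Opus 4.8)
The plan is to exploit the Schur-concavity of the entropy function $H(x) = -x\log x$ together with the additivity structure of the CMI formula in Definition~\ref{cmile}. Recall that the map $(x_1,\ldots,x_k)\mapsto \sum_i H(x_i)$ is a symmetric concave function on the simplex, hence Schur-concave; consequently if $\mathbf{u}\succ\mathbf{v}$ (both probability vectors of the same length summing to the same total) then $\sum_i H(u_i) \le \sum_i H(v_i)$. I would state this as the one external ingredient, citing \cite{bhatia} or \cite{marshall}.

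First I would observe that $I(M)$, as written in~(\ref{CMIdef}), decomposes as the sum of three pieces: the row-marginal entropy $\sum_i H(r_i)$, the column-marginal entropy $\sum_j H(c_j)$, and the \emph{negative} of the joint entropy $-\sum_k H(\lambda_k)$. The key subtlety is that the hypotheses $\mathbf{r}(M_1)\succ\mathbf{r}(M_2)$ and $\mathbf{c}(M_1)\succ\mathbf{c}(M_2)$ give control over the first two pieces but say nothing directly about the third. I would then point out that since $M_1$ and $M_2$ are probability matrices with the same total mass $1$ and, as the lemma is being applied within a fixed unitary orbit, the \emph{same multiset of entries} $\{\lambda_k\}$ (they are permutations of one another), the joint-entropy term $\sum_k H(\lambda_k)$ is identical for $M_1$ and $M_2$ and simply cancels. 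If the intended generality does not presuppose equal entry multisets, I would instead note that the lemma is only invoked in the paper for matrices arising from a common spectrum, so this identification is legitimate; alternatively one assumes $M_1, M_2$ are entrywise rearrangements of each other as part of "probability matrices" in this context.

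With the joint term neutralised, the proof reduces to two independent applications of Schur-concavity: from $\mathbf{r}(M_1)\succ\mathbf{r}(M_2)$ we get $\sum_i H(r_i(M_1)) \le \sum_i H(r_i(M_2))$, and from $\mathbf{c}(M_1)\succ\mathbf{c}(M_2)$ we get $\sum_j H(c_j(M_1)) \le \sum_j H(c_j(M_2))$. Adding these two inequalities and appending the common joint-entropy term (with its minus sign) to both sides yields exactly $I(M_1) \le I(M_2)$, which is the claim.

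The main obstacle, such as it is, is conceptual rather than computational: making precise the hypothesis under which the joint-entropy term is shared between $M_1$ and $M_2$. Schur-concavity of $\sum H(x_i)$ is standard and I would not reprove it. So the real content of the write-up is (i) recording the Schur-concavity fact cleanly, and (ii) flagging that majorisation of marginals bounds marginal entropies in the correct direction while the total entropy stays fixed across an orbit — after which the inequality is immediate by addition. I expect the proof to be only a few lines once these observations are in place.
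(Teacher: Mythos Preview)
Your proposal is correct and follows essentially the same approach as the paper, which simply cites \cite{santerdav} and notes that the result follows from Schur-concavity of $H$ (referencing \cite{bhatia}, \S II.3). You have in fact been more careful than the paper: your observation that the joint-entropy term $\sum_k H(\lambda_k)$ must be common to $M_1$ and $M_2$ for the argument to go through is a genuine implicit hypothesis that the paper's one-line proof glosses over, and your resolution---that in every application the matrices are permutations of a fixed spectrum---is exactly right.
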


\begin{proof}
See~\cite{santerdav}: it follows from the fact that $H$ is a Schur-concave function (see \cite{bhatia}, \S II.3).
\end{proof}

It should be pointed out that the converse is definitely NOT true: indeed it is this very failure which enables us to prove the main theorem of this paper.

\begin{definition}\label{critta}
If the hypotheses of Lemma \ref{majoris} hold then we write
$$M_1\succ M_2$$
and we shall say that $M_1$ majorises $M_2$: \bf but note that this matrix terminology is not standard.
\end{definition}

By symmetry the relation of majorisation between matrices is invariant under row swaps and/or column swaps. In addition if $m=n$ then the majorisation relation is also invariant under transposition.

\subsubsection{An entropic binary relation $\rhd$ among $m\times n$ probability matrices}

The entropic binary relation $\rhd$, which we now define, is the key to proving theorem~\ref{biggun}.
If we consider the class of $(mn)!$ matrices formed by permuting the entries in the matrix $P$ in~(\ref{rowsncols}) and look at the CMI of each of these, there is a rigid \it a priori \rm partial order which arises between them~\cite{mcmaj}. Most of this can be explained by majorisation considerations; however in low dimensions there is a substantial set of relations which depends on a much finer graining than majorisation gives. This fine-graining is an entropic binary relation which is implied by the stronger relation of majorisation: see proposition~\ref{majmcmaj}.

The general relation is defined as follows. Recall from above the definition of the classical mutual information $I(P)$ of a probability matrix~$P$. For any positive integer $N$ we denote by $\mathbf{S}_N$ the symmetric group on $N$ letters.
\begin{definition}\label{qm}
Let $P=(p_{ij})$ be any $m\times n$ probability matrix and let $Q$ be an $m\times n$ matrix obtained by some permutation of the elements of $P$ \rm(\it that is, viewed as vectors: $Q=P^\sigma$ for some $\sigma\in\mathbf{S}_{mn}$\rm)\it.
Suppose that a complete ordering is given of the $p_{ij}$. We say that $P\rhd Q$ if it can be shown \rm a priori solely using this ordering of the entries\it, that $I(Q)-I(P)$ is non-negative. 

In other words given any ordered probability vector $(p_{ij})$, if we arrange its elements into the orders displayed in~$P$ and~$Q$ then $I(Q)-I(P)\geq0$.

NB: In order to keep the terminology consistent with that of majorisation, we have adopted the convention that $P\rhd Q$ corresponds to $I(P) \leq I(Q)$.
\end{definition}
That is to say, given an \it a priori \rm ordering of the elements of the matrix, such a relation $P\rhd Q$ holds \bf irrespective \rm of the relative sizes of these matrix entries.

\begin{remark}
We mentioned above the connection with the symmetric group $\mathbf{S}_{mn}$. The partial order arising on the matrices gives a partial order on the space of cosets of $\mathbf{S}_{mn}$ modulo a subgroup representing row and column swaps (see for example section~\ref{scub}), because the relations are guaranteed to hold for all $m\times n$ matrices depending as they do only upon the particular arrangement of the $mn$ elements. This points to a deeper connection with combinatorial group theory which we explore in~\cite{mcmaj}.
\end{remark}

In order to see what ~$\rhd$ means in the case which will most interest us - that of a simple transposition - we consider a general $m\times n$ probability matrix $P=(p_{ij})$ with no assumed order among the entries $p_{ij}$. Let $\tau$ be any transposition acting on $P$, interchanging two elements which we shall refer to as $\alpha$ and $\beta$ (by a slight abuse of notation, since the positions and their values will be referred to by the same symbols). The following diagram illustrates this action of $\tau$ on $P$: we write $P^\tau$ for the image of $P$ under $\tau$.

\begin{equation}
\label{matricks}
\bordermatrix{
&&&&c_\beta&&c_\alpha&\cr
&p_{11} & p_{12} & \ldots & \ldots & \ldots & \ldots & p_{1n} \cr
&p_{21} & p_{22} & \ldots & \ldots & \ldots & \ldots & p_{2n} \cr
&\vdots & \vdots & \vdots & \vdots & \vdots & \vdots & \vdots \cr
r_\alpha&\ldots & \ldots & \ldots & \ldots & \ldots & \alpha & \ldots \cr
&\vdots & \vdots & \vdots & \vdots & \vdots & \vdots & \vdots \cr
r_\beta&\ldots & \ldots & \ldots & \beta  & \ldots & \ldots & \ldots \cr
&\vdots & \vdots & \vdots & \vdots & \vdots & \vdots & \vdots \cr
&p_{m1} & p_{m2} & \ldots & \ldots & \ldots & \ldots & p_{mn}\cr
}=P\ ;\ \ \ \
\bordermatrix{
&&&&c_\beta^\tau&&c_\alpha^\tau&\cr
&p_{11} & p_{12} & \ldots & \ldots & \ldots & \ldots & p_{1n} \cr
&p_{21} & p_{22} & \ldots & \ldots & \ldots & \ldots & p_{2n} \cr
&\vdots & \vdots & \vdots & \vdots & \vdots & \vdots & \vdots \cr
r_\alpha^\tau&\ldots & \ldots & \ldots & \ldots & \ldots & \beta  & \ldots \cr
&\vdots & \vdots & \vdots & \vdots & \vdots & \vdots & \vdots \cr
r_\beta^\tau&\ldots & \ldots & \ldots & \alpha & \ldots & \ldots & \ldots \cr
&\vdots & \vdots & \vdots & \vdots & \vdots & \vdots & \vdots \cr
&p_{m1} & p_{m2} & \ldots & \ldots & \ldots & \ldots & p_{mn}\cr
}=P^\tau.
\end{equation}

Without loss of generality we may stipulate that \bf as matrix entries \rm $\alpha>\beta$ (if they are equal there is nothing to be done). We wish to compare $I(P)$ with $I(P^\tau)$. Note firstly that by the definition of CMI, the difference $I(P^\tau)-I(P)$ depends only on the rows and columns containing $\alpha,\beta$. All of the rest of the terms vanish as they are not affected by the action of $\tau$. We denote by $r_\alpha$ (respectively $r_\beta$) the sum of the entries in the row of $P$ which contains $\alpha$ (respectively $\beta$), and by $c_\alpha$ (respectively $c_\beta$) the sum of the entries in the column of $P$ which contains $\alpha$ (respectively $\beta$). Similarly, we denote by $r_\alpha^\tau,r_\beta^\tau,c_\alpha^\tau,c_\beta^\tau$ the image of these quantities under the action of $\tau$. See the diagram~(\ref{matricks}) above.

NB: $r_\alpha^\tau,c_\alpha^\tau$ (respectively, $r_\beta^\tau,c_\beta^\tau$) no longer contain $\alpha$ (respectively~$\beta$), but rather~$\beta$ (respectively~$\alpha$).

So the quantity we are interested in becomes
\begin{equation}
I(P^\tau)-I(P) = H(r_\alpha^\tau)-H(r_\alpha)+H(r_\beta^\tau)-H(r_\beta)+H(c_\alpha^\tau)-H(c_\alpha)+H(c_\beta^\tau)-H(c_\beta),
\end{equation}
with the proviso that if $\alpha$ and $\beta$ happen to be in the same row (respectively column) then the $r_\bullet^\circ$ (respectively, $c_\bullet^\circ$) terms vanish.
The terms on the right hand side are grouped in pairs of the form $\pm(H(x+(\alpha-\beta))-H(x))$, which means we may write it in a more suggestive form:
\begin{equation}\label{slump}
I(P^\tau)-I(P) = (\alpha-\beta)\left(-\frac{H(r_\alpha)-H(r_\alpha^\tau)}{\alpha-\beta}+\frac{H(r_\beta^\tau)-H(r_\beta)}{\alpha-\beta} -\frac{H(c_\alpha)-H(c_\alpha^\tau)}{\alpha-\beta}+\frac{H(c_\beta^\tau)-H(c_\beta)}{\alpha-\beta}\right).
\end{equation}
In order to use calculus we need the machinery of Lagrangian means (see chapter VI \S2.2 of \cite{bullen}).

\begin{definition}
Let $\varphi$ be a continuously differentiable and strictly convex or strictly concave function defined on a real interval $I$, with first derivative $\varphi'$. Define the {\bf Lagrangian mean $\mu_\varphi$ associated with $\varphi$} \it to be:
\begin{equation}\label{lagrean}
\mu_\varphi(a,b)  = \begin{cases}  {\varphi'}^{-1}\left(\frac{\varphi(b)-\varphi(a)}{b-a}\right)&\mbox{if }b\neq a \\
                                    a &\mbox{if }b=a                 \end{cases}
\end{equation}
for any $a,b\in I$, where ${\varphi'}^{-1}$ denotes the unique (on $I$, by virtue of strict convexity/concavity and differentiability) inverse of $\varphi'$.
\end{definition}
In other words, $\mu_\varphi$ is the function which arises from the Lagrangian mean value theorem in the process of going from the points $(a,\varphi(a))$ and $(b,\varphi(b))$ subtending a secant on the curve of $\varphi$, to the unique (in this case) point $\mu_\varphi(a,b)\in[a,b]$ where the slope of the tangent to the curve $\varphi$ is equal to that of the secant. See the diagram below.

\begin{figure}[h!btp]\label{secant}
\begin{center}
\mbox{
\subfigure{\includegraphics[width=3in,height=3in,keepaspectratio]{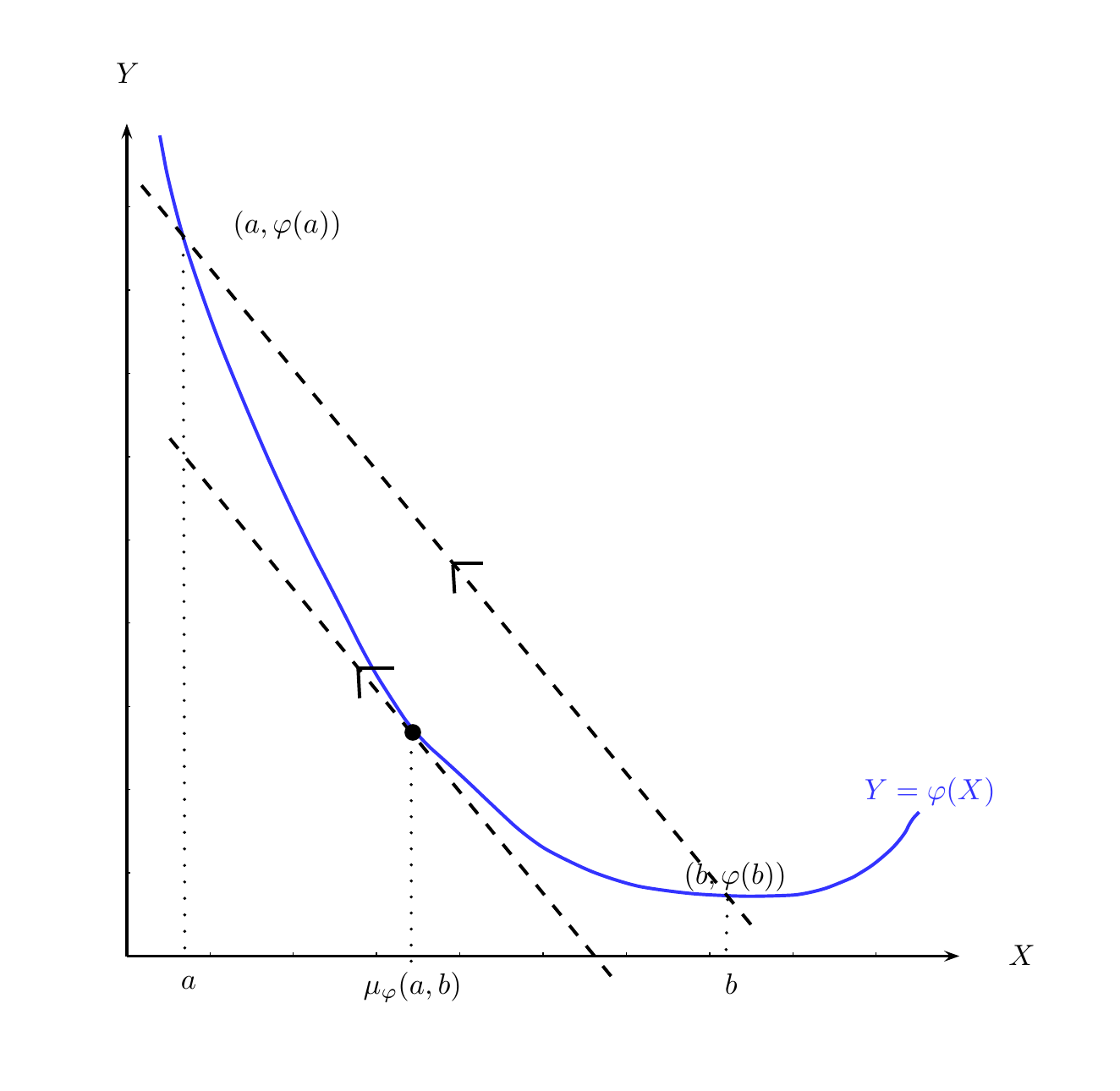}}
}
\caption{Definition of $\mu_\varphi$}
\end{center}
\end{figure}

Each of the arguments for the function $H$ in (\ref{slump}) lies in the interval $I=[0,1]$.  Since $H$ is well-defined and indeed strictly concave and infinitely differentiable on $I$ we may rewrite (\ref{slump}) as:
\begin{eqnarray}
I(P^\tau)-I(P) & = & (\alpha-\beta)\left(-H'(\mu_H(r_\alpha^\tau,r_\alpha))+H'(\mu_H(r_\beta,r_\beta^\tau))
-H'(\mu_H(c_\alpha^\tau,c_\alpha))+H'(\mu_H(c_\beta,c_\beta^\tau))\right)\\
& = & (\alpha-\beta)\log \frac{\mu_H(r_\alpha^\tau,r_\alpha)\mu_H(c_\alpha^\tau,c_\alpha)}
{\mu_H(r_\beta,r_\beta^\tau)\mu_H(c_\beta,c_\beta^\tau)},\label{wang}
\end{eqnarray}
the second line following from the fact that in our context $\varphi'(x) = H'(x) = -(1+\log(x))$.
Since $(\alpha-\beta)>0$ by hypothesis, in order to determine which matrix gives higher CMI we only need consider the relative sizes of the numerator and denominator of the argument of the logarithm. So it is enough to study the quantity
\begin{eqnarray}
\mu_H(r_\alpha^\tau,r_\alpha)\mu_H(c_\alpha^\tau,c_\alpha)-
\mu_H(r_\beta,r_\beta^\tau)\mu_H(c_\beta,c_\beta^\tau).
\label{wing}
\end{eqnarray}

We are now in a position to re-state what is meant by the entropic binary relation~$\rhd$ for this special case of a transposition.

\begin{lemma}
With notation as above, $P\rhd P^\tau$ if and only if it can be shown \rm a priori \it that the quantity in {\rm (\ref{wing})} is non-negative.\qed
\end{lemma}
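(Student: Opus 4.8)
The plan is to recognise that this lemma is little more than a restatement of the identity~(\ref{wang}) once one unwinds the precise meaning of ``a priori'' in Definition~\ref{qm}; all the analytic content was already spent in passing from~(\ref{slump}) to~(\ref{wang}) via the Lagrangian mean value theorem. First I would fix the labelling, exactly as in the running discussion, so that $\alpha>\beta$ as matrix entries --- information that is carried by the prescribed total ordering --- and write $N$ and $D$ for the numerator and denominator of the fraction appearing in~(\ref{wang}), adopting the proviso stated after~(\ref{slump}): whenever $\alpha$ and $\beta$ share a row (respectively a column) the corresponding $\mu_H$ factor degenerates to that common marginal sum, so that $N$ and $D$ are well defined in all cases. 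By construction $N-D$ is precisely the quantity displayed in~(\ref{wing}).

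Next I would record the two elementary facts that make the sign of $I(P^\tau)-I(P)$ visible. Each of the four Lagrangian means occurring in~(\ref{wang}) has at least one argument that is a marginal sum containing $\alpha$ --- for $N$ these are $r_\alpha$ and $c_\alpha$, for $D$ the post-swap sums $r_\beta^\tau$ and $c_\beta^\tau$ --- hence at least one argument is bounded below by $\alpha>0$; and by the definition~(\ref{lagrean}) the value $\mu_H(x,y)$ lies between its two arguments, so each such mean is strictly positive. Therefore $N,D>0$, the fraction $N/D$ in~(\ref{wang}) is a well-defined positive real, $\log(N/D)$ has the same sign as $N-D$, and since $\alpha-\beta>0$ we obtain
$$I(P^\tau)-I(P)\geq 0 \iff \frac{N}{D}\geq 1 \iff N-D\geq 0,$$
with the analogous equivalences for strict inequality and for equality.

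Finally I would do the bookkeeping on the quantifier. The displayed chain of equivalences is not an assertion about one particular numerical matrix: it is an identity that holds for \emph{every} assignment of values to the entries of $P$ compatible with the given ordering, invoking only that $\alpha>\beta$ and that the relevant sums are positive, both of which are consequences of the ordering hypothesis. Hence the statement ``$I(P^\tau)-I(P)\geq 0$ is forced by the ordering alone'' is logically equivalent to the statement ``the quantity~(\ref{wing}) is non-negative, and this is forced by the ordering alone''; by Definition~\ref{qm} the former is exactly $P\rhd P^\tau$, which proves the lemma. I do not expect a genuine obstacle here --- the lemma is essentially definitional --- and the only points deserving a moment's care are the two just highlighted: that ``a priori provability'' transfers across the equivalence precisely because the equivalence is a pointwise identity valid uniformly over all order-compatible value assignments (not merely at a single point), and that the strict positivity of $N$ and $D$ keeps the logarithm in~(\ref{wang}) well defined even when some matrix entries vanish.
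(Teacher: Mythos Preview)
Your proposal is correct and matches the paper's approach: the paper marks this lemma with \qed\ and gives no separate proof, treating it as an immediate restatement of~(\ref{wang}) under the standing convention $\alpha>\beta$, which is exactly what you have carefully unpacked. One tiny imprecision --- that $\mu_H$ lying between its arguments only yields $\mu_H\ge\min(x,y)$, not strict positivity --- is immaterial here, since in the paper's setting all matrix entries (hence all marginal sums) are strictly positive.
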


To study the function $\mu_H$ in more detail we shall need the following lemmas.

\begin{lemma}\label{trance}
Let $u\leq v\leq w\leq z$ be any four positive numbers satisfying $v+w\geq u+z$. Then $vw\geq uz$. 
\end{lemma}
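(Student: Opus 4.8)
The statement is an elementary inequality about four positive reals, so I expect a short direct argument rather than anything structural. The plan is to exploit the hypothesis $v+w\ge u+z$ by writing everything in terms of the deviations from the ``middle'' pair. Concretely, set $s=v+w$ and note $s\ge u+z$; I would like to compare the products $vw$ and $uz$ given that $v,w$ lie between $u$ and $z$ (since $u\le v\le w\le z$) and that their sum dominates $u+z$. The cleanest route is to fix the sum: among pairs of positive numbers with a fixed sum, the product increases as the pair becomes less spread out. So first I would observe that $vw$ is at least as large as the product of \emph{any} pair of positive numbers summing to $v+w$ whose spread is at least $w-v$; in particular I want to find such a pair that also makes the comparison with $uz$ transparent.

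Here is the concrete computation I would carry out. Because $v+w\ge u+z$, there is a real $t\ge 0$ with $v+w = u+z+2t$; equivalently $(v+w) = (u-? )\ldots$ — more usefully, consider the pair $u' = u+t$ and $z' = z+t$, which has sum $u+z+2t = v+w$ and spread $z'-u' = z-u \ge w-v \ge 0$ (the last inequality from $u\le v\le w\le z$). Since $v,w$ is a pair with the same sum $v+w$ but spread $w-v$ no larger than that of $u',z'$, concavity of the product on a fixed-sum slice (i.e. the identity $vw - u'z' = \tfrac14\big((z'-u')^2-(w-v)^2\big)\ge 0$, using $v+w=u'+z'$) gives $vw \ge u'z' = (u+t)(z+t) = uz + t(u+z) + t^2 \ge uz$, since $t,u,z\ge 0$. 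That chains to $vw\ge uz$, which is exactly the claim.

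The only step needing a word of care is the fixed-sum product identity: for reals $x\le y$ with $x+y = \sigma$, one has $xy = \tfrac{\sigma^2}{4} - \tfrac{(y-x)^2}{4}$, so a smaller spread means a larger product; I would state this as a one-line lemma or just inline it. I should also double-check that $t^2 + t(u+z)\ge 0$ genuinely needs only positivity of $u,z$ and $t\ge0$, which it does, so no hidden hypothesis is used beyond what is given. I do not anticipate a real obstacle here — the ordering $u\le v\le w\le z$ is doing exactly the work of guaranteeing $z-u\ge w-v$, and the sum condition supplies the nonnegative shift $t$; everything else is the two-line algebra above. An alternative, if one prefers to avoid introducing $t$, is to argue directly: $vw - uz = vw - u(u+z-u) \ldots$ — but the shift-by-$t$ version is the most mechanical, so that is the one I would write up.
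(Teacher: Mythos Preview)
Your argument is correct. It differs from the paper's proof, which is a two-line direct chain: setting $\xi=v-u\ge 0$, the hypothesis gives $z\le w+\xi$, and then $vw=uw+\xi w\ge uw+\xi u=u(w+\xi)\ge uz$, using only $w\ge u$ and $u>0$. Your route instead introduces the shift $t=\tfrac12((v+w)-(u+z))\ge 0$, compares $vw$ to $(u+t)(z+t)$ via the fixed-sum identity $xy=\tfrac{\sigma^2}{4}-\tfrac{(y-x)^2}{4}$ together with the spread bound $z-u\ge w-v$, and then drops the nonnegative term $t(u+z)+t^2$. Both are elementary; the paper's version is shorter and more opportunistic, while yours makes explicit the underlying mechanism (on a fixed-sum slice the product is monotone in the spread), which is arguably more transparent about \emph{why} the inequality holds and would generalise more readily if one wanted a quantitative gap.
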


\begin{proof}
Let $\xi=v-u\geq0$ so that $v=u+\xi$ and $z\leq w+\xi$. Then $vw=uw+\xi w\geq uw+\xi u\geq uz$.
\end{proof}

\begin{lemma}\label{crucifix}
Let $\psi$ be a concave monotonically increasing function of the non-negative real numbers taking positive values. Let $p<q<r<s$ be positive real numbers satisfying $q+r\geq p+s$. Then $\psi(q)+\psi(r) \geq \psi(p)+\psi(s)$ and consequently:
\begin{equation}\label{coddy}
\psi(q)\cdot \psi(r) \geq \psi(p)\cdot \psi(s).
\end{equation}
\end{lemma}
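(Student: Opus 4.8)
The plan is to first establish the additive inequality $\psi(q)+\psi(r)\geq\psi(p)+\psi(s)$ and then deduce the multiplicative one \eqref{coddy} by invoking Lemma \ref{trance} with $(u,v,w,z)=(\psi(p),\psi(q),\psi(r),\psi(s))$. Note that the monotone increasing hypothesis on $\psi$ gives $0<\psi(p)\leq\psi(q)\leq\psi(r)\leq\psi(s)$, so these four values are indeed positive and correctly ordered; once we know $\psi(q)+\psi(r)\geq\psi(p)+\psi(s)$, Lemma \ref{trance} applies verbatim and yields $\psi(q)\cdot\psi(r)\geq\psi(p)\cdot\psi(s)$. So the entire content of the lemma is the additive statement, and the multiplicative corollary is essentially free.

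For the additive inequality I would argue as follows. Write $s = q + r - p + \delta$ where $\delta = (p+s)-(q+r)\leq 0$ by hypothesis; actually it is cleaner to first handle the boundary case $q+r=p+s$ and then note that decreasing $s$ (with $p,q,r$ fixed) only decreases the right-hand side, since $\psi$ is increasing — so it suffices to prove $\psi(q)+\psi(r)\geq\psi(p)+\psi(s)$ when $p+s=q+r$. In that case set $t = q-p = s-r \geq 0$ (these are equal precisely because $p+s=q+r$, and nonnegative because $p<q$). Then $q = p+t$ and $s = r+t$ with $p < r$. The claim becomes $\psi(p+t)-\psi(p)\geq\psi(r+t)-\psi(r)$, i.e.\ the increment of $\psi$ over an interval of length $t$ starting at the smaller point $p$ is at least the increment over the interval of the same length starting at the larger point $r$. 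This is exactly the statement that $\psi$ has decreasing increments, which is the standard characterisation of concavity: for a concave function, $x\mapsto \psi(x+t)-\psi(x)$ is non-increasing in $x$ for each fixed $t\geq 0$. One can see this directly by writing $\psi(p+t)-\psi(p) = \int_0^t \psi'(p+u)\,du$ and $\psi(r+t)-\psi(r)=\int_0^t\psi'(r+u)\,du$ and using $\psi'(p+u)\geq\psi'(r+u)$ (concavity $\Rightarrow$ $\psi'$ non-increasing), or without differentiability by a midpoint/secant argument.

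I do not expect a serious obstacle here; the only mild subtlety is bookkeeping the reduction from the inequality $q+r\geq p+s$ to the equality case, and making sure the monotonicity of $\psi$ is invoked in the right direction (it is used both to reduce to the equality case and, via Lemma \ref{trance}, to pass from the sum to the product). The concavity ``decreasing increments'' fact is the true engine of the proof, and it is entirely standard; I would state it in the integral form above for a one-line justification, since $\psi$ here (ultimately $\mu_H$-type or $H$-type functions) is smooth enough that differentiability is not a restriction, though I would phrase it so that it holds for any concave $\psi$.
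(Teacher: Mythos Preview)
Your proposal is correct and takes essentially the same approach as the paper: establish the additive inequality from concavity and then invoke Lemma~\ref{trance} for the product. The only cosmetic difference is that the paper argues the additive part by contradiction via the mean value theorem (comparing secant slopes on $[p,q]$ and $[r,s]$) rather than first reducing to the equality case $q+r=p+s$ and invoking the decreasing-increments characterisation directly; both rest on the same fact that $\psi'$ is non-increasing.
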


\begin{remark}\label{queer}
The condition $q+r \geq p+s$ is sufficient to prove {\rm (\ref{coddy})} but it is not necessary, as the example $\psi={\rm identity}$ and $p=0.1,q=0.4,r=0.4,s=0.9$ shows.
\end{remark}

\begin{proof}
We first show that $$\psi(q)+\psi(r) \geq \psi(p)+\psi(s).$$
Suppose to the contrary that $\psi(p)+\psi(s) > \psi(q)+\psi(r)$: on rearranging we obtain
\begin{equation*}\psi(s)-\psi(r) > \psi(q)-\psi(p).\end{equation*}
By a similar rearrangement of the hypothesis of the lemma we know that $s-r\leq q-p$ and so combining these and using the fact that all terms are positive:
$$\frac{\psi(s)-\psi(r)}{s-r} > \frac{\psi(q)-\psi(p)}{q-p}.$$
By the mean value theorem there exist $\gamma\in[r,s]$ and $\delta\in[p,q]$ such that $\psi'(\gamma) > \psi'(\delta)$. Since $\psi$ is concave it follows that $\psi'$ is monotonically non-increasing, so this implies in turn that $\gamma<\delta$. But the intervals $[p,q]$ and $[r,s]$ are disjoint with the first entirely less than the second, hence $\gamma>\delta$, which is the desired contradiction. 

Now $0\leq\psi(p)\leq\psi(q)\leq\psi(r)\leq\psi(s)$ by the hypothesis that $\psi$ is monotonically increasing, so inequality (\ref{coddy}) follows from lemma~\ref{trance} on setting $u=\psi(p)$, $v=\psi(q)$, w=$\psi(r)$, z=$\psi(s)$.
\end{proof}

We now prove some facts about $\mu_H$ which will give us an insight into the sign of the quantity in~(\ref{wing}).

\begin{lemma}\label{lollipop}
Fix $t\in(0,1)$. For $x\in(0,1-t)$:

(i) $\mu_H(x,x+t)>0$ and is strictly monotonically increasing in $x$;

(ii) $\mu_H(x,x+t)$ is strictly concave in $x$;

(iii) $\frac{1}{e} < \frac{1}{t}(\mu_H(x,x+t)-x) < \frac{1}{2}$.
\end{lemma}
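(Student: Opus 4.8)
The plan is to work directly from the closed form of $\mu_H$. Since $H'(x) = -(1+\log x)$, we have $(H')^{-1}(y) = e^{-1-y}$, so for $b\neq a$
\[
\mu_H(a,b) = \exp\!\left(-1 - \frac{H(b)-H(a)}{b-a}\right) = \exp\!\left(-1 + \frac{b\log b - a\log a}{b-a}\right).
\]
Writing $g(x) := \mu_H(x,x+t)$ for fixed $t\in(0,1)$, I would set $\phi(x) := \log g(x) = -1 + \frac{(x+t)\log(x+t) - x\log x}{t}$ and study $\phi$, since $g = e^\phi$ inherits positivity automatically and monotonicity/concavity of $g$ can be read off from $\phi'$ and $\phi''$ together with the sign of $g$. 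Differentiating, $\phi'(x) = \frac{1}{t}\big(\log(x+t) - \log x\big) = \frac{1}{t}\log\!\frac{x+t}{x} > 0$, which immediately gives part~(i): $g$ is positive (being an exponential) and strictly increasing. For part~(ii), $\phi''(x) = \frac{1}{t}\big(\frac{1}{x+t} - \frac{1}{x}\big) = -\frac{1}{x(x+t)} < 0$, so $\phi$ is strictly concave; then $g'' = (\phi'' + (\phi')^2)e^\phi$, so concavity of $g$ is \emph{not} automatic and the real content of~(ii) is showing $\phi''(x) + \phi'(x)^2 < 0$, i.e.
\[
\left(\frac{1}{t}\log\frac{x+t}{x}\right)^{\!2} < \frac{1}{x(x+t)}.
\]
Substituting $s = t/x > 0$ this becomes $\big(\log(1+s)\big)^2 < s(1+s)/\,$(after clearing $x^2$ and $t^2/x^2 = s^2$) — more precisely $\log^2(1+s) < \frac{s^2}{1+s}$, equivalently $\log(1+s) < \frac{s}{\sqrt{1+s}}$ for all $s>0$, which is a standard one-variable inequality provable by comparing derivatives at $0$ (both sides vanish at $s=0$; the derivative inequality $\frac{1}{1+s} < \frac{2+s}{2(1+s)^{3/2}}$ reduces to $2\sqrt{1+s} < 2+s$, true by AM–GM). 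This is the step I expect to be the main obstacle: turning the bare second-derivative condition into a clean scalar inequality and choosing the substitution that trivialises it.

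For part~(iii), I would again reduce to a one-variable statement via $s = t/x$. The mean value theorem gives $\mu_H(x,x+t) = (H')^{-1}(H'(\xi))$ for some $\xi\in(x,x+t)$, and since $(H')^{-1}$ is decreasing and composed with a decreasing derivative slope, $\mu_H(x,x+t)\in(x,x+t)$ already; the sharper bounds require the explicit form. Write $\frac{1}{t}(g(x) - x) = \frac{1}{t}\big(e^{\phi(x)} - x\big)$ and note $e^{\phi(x)} = x\cdot\exp\!\big(\frac{1+s}{s}\log(1+s) - 1\big)$ after factoring $x$ out of the logarithmic term, so
\[
\frac{1}{t}\big(g(x)-x\big) = \frac{x}{t}\left(\exp\!\Big(\tfrac{1+s}{s}\log(1+s) - 1\Big) - 1\right) = \frac{1}{s}\left(\exp\!\Big(\tfrac{1+s}{s}\log(1+s) - 1\Big) - 1\right).
\]
Thus (iii) is equivalent to the pure inequality, for all $s>0$,
\[
\frac{1}{e} < \frac{1}{s}\left(\exp\!\Big(\tfrac{1+s}{s}\log(1+s) - 1\Big) - 1\right) < \frac{1}{2}.
\]
The endpoint behaviour pins down the bounds: as $s\to 0^+$ one expands $\frac{1+s}{s}\log(1+s) = 1 + \frac{s}{2} - \frac{s^2}{6} + \cdots$, so the bracketed exponential is $\exp(\frac{s}{2} + O(s^2)) - 1 = \frac{s}{2} + O(s^2)$, giving limit $\frac12$; as $s\to\infty$, $\frac{1+s}{s}\log(1+s)\to\infty$ but $\exp(\cdots - 1)/s \to$ a limit computed from $\frac{1}{s}\exp(\log(1+s) + \tfrac1s\log(1+s) - 1) = \frac{1+s}{s}e^{-1}e^{\frac1s\log(1+s)} \to e^{-1}$. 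So both bounds are the limiting values and are approached but not attained; to get strict inequality on all of $(0,\infty)$ I would show the function $F(s) := \frac{1}{s}(e^{\frac{1+s}{s}\log(1+s)-1} - 1)$ is strictly monotone decreasing in $s$ — which should follow from a single sign computation of $F'$, again reducing to an elementary $\log$-versus-rational inequality of the kind used in part~(ii). If monotonicity proves awkward to establish cleanly, the fallback is to bound $F$ above and below separately using $\log(1+s) < s$ and $\log(1+s) > \frac{s}{1+s}$ (or sharper two-sided bounds) inside the exponent, which suffices since the exponential is increasing.

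In short: parts (i) and (ii)'s monotonicity are one-line derivative computations; the concavity half of (ii) and all of (iii) are genuine inequalities, but each collapses after the substitution $s = t/x$ to a scalar inequality between $\log(1+s)$ and an algebraic function of $s$, all provable by the standard device of checking equality at $s=0$ (or a limit at $s=0,\infty$) and comparing derivatives. The main obstacle is simply bookkeeping discipline in extracting the right scalar inequality at each stage; no deep idea beyond strict concavity of $H$ and elementary calculus is needed.
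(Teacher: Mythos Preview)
Your proposal is correct and follows essentially the same route as the paper: both compute the identric-mean closed form, and for (ii) both reduce (after the substitution $s=t/x$, which the paper writes as $\xi=1+t/x$) to the identical scalar inequality $\log(1+s)<s/\sqrt{1+s}$, with your AM--GM endgame and the paper's reduction to $1+\log z<z$ being equivalent one-liners. Your logarithmic-derivative packaging $g=e^\phi$ is a mild cosmetic streamlining of the paper's direct differentiation, and you actually sketch (iii) whereas the paper omits it as unnecessary for the main theorem.
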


Note that (iii) says that the Lagrangian mean of $x$ and $y$ occurs between $x+\frac{y-x}{e}$ and $x+\frac{y-x}{2}$. Both extremes occur in the limit, so \it a priori \rm we cannot narrow the range down further than this.

\begin{proof}
Solving~(\ref{lagrean}) explicitly for $\varphi=H$ we see that $\mu_H$ is in fact what is known as the {\it identric mean of $x$ and $y$\rm}:
\begin{equation*}
\mu_H(x,y) = e^{-1}\left(\frac{y^y}{x^x}\right)^\frac{1}{y-x},
\end{equation*}
or if we set $t=y-x$:
\begin{eqnarray*}
\mu_H(x,x+t) & = & e^{-1}\left(\frac{(x+t)^{(x+t)}}{x^x}\right)^\frac{1}{t}\\
             & = & e^{-1}(x+t)(1+\frac{t}{x})^\frac{x}{t}\ .
\end{eqnarray*}
From this, the fact that $\mu_H(x,x+t)>0$ for $x,t>0$ may be seen directly. Taking the first derivative with respect to $x$ gives
$$\frac{\partial}{\partial x}\left(\mu_H(x,x+t)\right) = e^{-1}(1+\frac{x}{t})(1+\frac{t}{x})^{\frac{x}{t}}\log(1+\frac{t}{x})$$
which once again is positive for $x,t>0$, proving that indeed $\mu_H(x,x+t)$ is strictly monotonically increasing in $x$ for fixed $t$. This proves (i).

To prove (ii) we take the second derivative with respect to $x$ (writing $\log^2(X)$ for $\left(\log(X)\right)^2$):
$$\frac{\partial^2}{\partial x^2}\left(\mu_H(x,x+t)\right) = e^{-1}(1+\frac{t}{x})^{\frac{x}{t}}\left[\frac{-1}{x}+\frac{t+x}{t^2}\log^2(1+\frac{t}{x})\right].$$
We need to establish that this is always negative: this will be the case if and only if the right-hand term in the square brackets is negative. So we must show that for $x,t>0$:
$$\frac{1}{x} > \frac{t+x}{t^2}\log^2(1+\frac{t}{x}),$$
which for $x>0$ is the same as showing that
$$(\frac{x}{t}+\frac{x^2}{t^2})\log^2(1+\frac{t}{x}) < 1.$$
Since $t>0$ is assumed fixed we may define a new variable $\xi=1+t/x$ and rewrite the left hand side as
$$\frac{\xi}{(\xi-1)^2}\log^2\xi$$
which is $<1$ if and only if its (positive) square root is $<1$, since $\xi$ takes values only between $(1+t)$ and $\infty$. So we need to show that for $\xi>1$,
$$\frac{\sqrt{\xi}\log\xi}{\xi-1}<1$$
or equivalently,
\begin{equation*}
\xi-\sqrt{\xi}\log\xi-1>0.
\end{equation*}
The limit of the left hand side of this expression as $\xi$ tends to $1$ from above is $0$. So it is enough to show that for any $\xi>1$, the derivative:
$$1-\frac{1+\log\sqrt{\xi}}{\sqrt{\xi}},$$
of this left hand side is positive,
which by a change of variable to $z=\sqrt\xi$ (which preserves our domain $\xi>1$) is simply the statement
$$1+\log z < z,$$
which is a standard fact about logarithms (see for example~\cite{hardy} \S 5.2.4).

We only require (i) and (ii) for the proof of theorem~\ref{biggun}, so since (iii) follows by similar techniques we omit the proof.

\end{proof}

We shall need the following sufficient condition for the entropic relation. Consider the four terms which constitute the \it first \rm argument in each of the instances of the function $\mu_H$ in (\ref{wing}), namely
\begin{equation}\label{titanic}
r_\alpha^\tau,c_\alpha^\tau,r_\beta,c_\beta.
\end{equation}
Observe that there are no \it a priori \rm relationships between the sizes of these quantities.
Let us consider the possible orderings of the four terms based upon what we know of the ordering of the matrix elements of $P$. In principle there are $24$ such possibilities; however in certain instances of small dimension such as our $2\times 3$ case, most of these may be eliminated and we are left with only a few orderings.

In looking at~(\ref{matricks}) for the special case where $m=2$ (and $n$ is any integer) one sees that the relations in the following diagram must always hold (ie irrespective of the values of the probabilities), where a downward arrow between values $x_\phi$ and $y_\psi$ indicates that $x_\phi>y_\psi$ \it a priori\rm.

\begin{figure}[h!btp]\label{diamonds}
\begin{center}
\mbox{
\subfigure{\includegraphics[width=6in,height=6in,keepaspectratio]{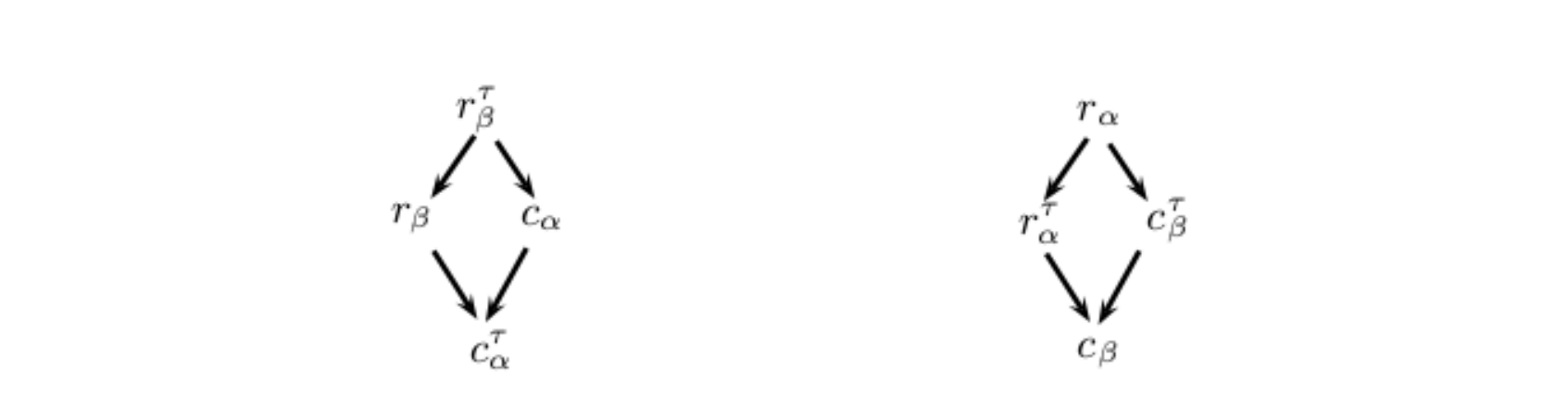}}
}
\end{center}\caption{All fixed relations between the quantities $r_\alpha,\ r_\beta,\ c_\alpha,\ c_\beta,\ r_\alpha^\tau,\ r_\beta^\tau,\ c_\alpha^\tau,\ c_\beta^\tau$ in the $2\times n$ case.}
\end{figure}

\begin{proposition}\label{titrate}
Suppose that the \rm a priori \it minimum element in (\ref{titanic}) is either $r_\beta$ or $c_\beta$. In addition suppose that we can verify \rm a priori \it that
$r_\beta+c_\beta \leq r_\alpha^\tau+c_\alpha^\tau.$
Then $P\rhd P^\tau$.

Conversely, suppose that the \rm a priori \it minimum element in (\ref{titanic}) is either $r_\alpha^\tau$ or $c_\alpha^\tau$ and in addition suppose that we can verify \rm a priori \it that
$r_\beta+c_\beta \geq r_\alpha^\tau+c_\alpha^\tau.$
Then $P\lhd P^\tau$.
\end{proposition}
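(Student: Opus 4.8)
The plan is to reduce the proposition to a single application of Lemma~\ref{crucifix}, with Lemma~\ref{lollipop} supplying the concave, increasing, positive function that it needs. Set $t=\alpha-\beta>0$. Using the elementary identities $r_\alpha=r_\alpha^\tau+t$, $r_\beta^\tau=r_\beta+t$, $c_\alpha=c_\alpha^\tau+t$, $c_\beta^\tau=c_\beta+t$, each of the four means appearing in~(\ref{wing}) becomes a value of the single function $\psi(x):=\mu_H(x,x+t)$: namely $\mu_H(r_\alpha^\tau,r_\alpha)=\psi(r_\alpha^\tau)$, $\mu_H(c_\alpha^\tau,c_\alpha)=\psi(c_\alpha^\tau)$, $\mu_H(r_\beta,r_\beta^\tau)=\psi(r_\beta)$ and $\mu_H(c_\beta,c_\beta^\tau)=\psi(c_\beta)$. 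The four arguments of~(\ref{titanic}) all lie in $(0,1-t)$, where by Lemma~\ref{lollipop}(i),(ii) the function $\psi$ is strictly positive, strictly monotonically increasing and strictly concave. Thus~(\ref{wing}) equals
$$\psi(r_\alpha^\tau)\,\psi(c_\alpha^\tau)-\psi(r_\beta)\,\psi(c_\beta),$$
and, by the earlier lemma identifying $P\rhd P^\tau$ with the a priori non-negativity of~(\ref{wing}), it suffices to show that this difference is a priori $\geq0$ under the first set of hypotheses and a priori $\leq0$ under the second --- where ``a priori'' means: for every assignment of values to $a,b,c,d,e,f$ compatible with the prescribed ordering.

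I would then fix one such assignment and consider the first assertion, assuming that the minimum of the four numbers is $r_\beta$ (the case $c_\beta$ is identical: interchanging the words ``row'' and ``column'' fixes both the displayed difference and the hypothesis $r_\beta+c_\beta\le r_\alpha^\tau+c_\alpha^\tau$). Write $M\ge m$ for the larger and smaller of $r_\alpha^\tau,c_\alpha^\tau$, so that $\psi(r_\alpha^\tau)\psi(c_\alpha^\tau)=\psi(M)\psi(m)$. If $M\ge c_\beta$ then also $m\ge r_\beta$, since $r_\beta$ is the overall minimum, so monotonicity and positivity of $\psi$ give $\psi(M)\ge\psi(c_\beta)>0$ and $\psi(m)\ge\psi(r_\beta)>0$, whence $\psi(M)\psi(m)\ge\psi(r_\beta)\psi(c_\beta)$ termwise; the sum condition plays no role in this branch. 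If instead $M<c_\beta$ then $r_\beta\le m\le M<c_\beta$, so $\{r_\beta,c_\beta\}$ are the extreme pair and $\{r_\alpha^\tau,c_\alpha^\tau\}$ the middle pair of the four, and the hypothesis $r_\beta+c_\beta\le r_\alpha^\tau+c_\alpha^\tau$ is precisely what is needed to apply Lemma~\ref{crucifix} with $p=r_\beta$, $q=m$, $r=M$, $s=c_\beta$, giving $\psi(m)\psi(M)\ge\psi(r_\beta)\psi(c_\beta)$. In both branches the difference is $\ge0$, hence $P\rhd P^\tau$. The converse is the mirror image: assuming the minimum of the four is $r_\alpha^\tau$ or $c_\alpha^\tau$ and $r_\alpha^\tau+c_\alpha^\tau\le r_\beta+c_\beta$, the same dichotomy --- plain monotonicity of $\psi$ when one of $r_\alpha^\tau,c_\alpha^\tau$ is simultaneously the largest of the four, Lemma~\ref{crucifix} with outer pair $\{r_\alpha^\tau,c_\alpha^\tau\}$ otherwise --- yields $\psi(r_\beta)\psi(c_\beta)\ge\psi(r_\alpha^\tau)\psi(c_\alpha^\tau)$, i.e.\ $P\lhd P^\tau$.

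The step I expect to require the most care --- though it is bookkeeping rather than genuine difficulty --- is matching the hypotheses to the precise form of Lemma~\ref{crucifix}, which insists that its distinguished pair $\{p,s\}$ really be the smallest and the largest of the four arguments. The hypothesis supplies only the a priori \emph{minimum}, so the actual content is the observation that, once (say) $r_\beta$ is the minimum, each admissible realization falls into exactly one of two alternatives: either $\{r_\alpha^\tau,c_\alpha^\tau\}$ already dominates $\{r_\beta,c_\beta\}$ componentwise --- in which case the conclusion is immediate --- or $c_\beta$ is forced to be the maximum, and then the sum hypothesis is exactly the input Lemma~\ref{crucifix} requires; since one of these holds in \emph{every} admissible realization, the inequality indeed holds a priori. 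I would also remark that in the $2\times n$ setting the fixed relations recorded in Figure~\ref{diamonds} are what render the minimum-element and sum hypotheses checkable a priori in the first place, and that equalities among the four quantities (permitted once the strict inequalities on the eigenvalues are relaxed to $\geq$) cause no trouble, Lemma~\ref{crucifix} extending to weak inequalities by continuity of $\psi$.
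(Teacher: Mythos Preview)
Your proof is correct and follows essentially the same route as the paper's: rewrite~(\ref{wing}) via $\psi(x)=\mu_H(x,x+t)$, then split into the case where $\{r_\alpha^\tau,c_\alpha^\tau\}$ already dominates $\{r_\beta,c_\beta\}$ componentwise (handled by monotonicity of $\psi$) and the case where the $\beta$-pair straddles the $\alpha^\tau$-pair (handled by Lemma~\ref{crucifix} using the sum hypothesis). The only cosmetic differences are that the paper takes $c_\beta$ rather than $r_\beta$ as the assumed minimum (invoking transposition for the WLOG), and phrases the dichotomy as ``is $r_\beta$ the maximum?'' rather than your equivalent ``is $\max\{r_\alpha^\tau,c_\alpha^\tau\}\ge c_\beta$?''.
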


\begin{proof}
We prove the first assertion; the second follows by symmetry. 

Without loss of generality (since we could at this stage equally consider the transposed matrices) we may assume that the minimum element in (\ref{titanic}) is $c_\beta$. Now if $r_\beta$ is \bf not \rm the maximum element in (\ref{titanic}) then one of $r_\alpha^\tau,c_\alpha^\tau$ is larger than $r_\beta$, hence both $r_\beta$ and $c_\beta$ are dominated by at least one or both of $r_\alpha^\tau,c_\alpha^\tau$, and so by the monotonicity of $\mu_H(x,x+t)$ for fixed $t$ (part (i) of lemma \ref{lollipop}) the expression in (\ref{wing}) must be non-negative, meaning $P\rhd P^\tau$ as required. So suppose to the contrary that $r_\beta$ \bf is \rm the maximum element in (\ref{titanic}), meaning that the ordering of the elements is either
$$c_\beta \leq r_\alpha^\tau \leq c_\alpha^\tau \leq r_\beta$$
or
$$c_\beta \leq c_\alpha^\tau \leq r_\alpha^\tau \leq r_\beta.$$
By rewriting (\ref{wing}) in a more explicit form and writing $t$ for $\alpha-\beta$, we need to show that
\begin{equation}\label{sminch}
\mu_H(r_\alpha^\tau,r_\alpha^\tau+t)\mu_H(c_\alpha^\tau,c_\alpha^\tau+t)-
\mu_H(r_\beta,r_\beta+t)\mu_H(c_\beta,c_\beta+t) \geq 0.
\end{equation}
But using parts (i) and (ii) of lemma~\ref{lollipop} we see that viewed simply as a function of $x$ for fixed values of $t$, $\mu_H(x,x+t)$ satisfies the hypotheses of lemma~\ref{crucifix}.
So finally we let $\psi(x) = \mu_H(x,x+t)$ for fixed $t=\alpha-\beta$, let $p=c_\beta,s=r_\beta$ and set $\{q,r\}=\{r_\alpha^\tau,c_\alpha^\tau\}$ in the appropriate ordering.
Then using the hypothesis of the proposition that $r_\beta+c_\beta \leq r_\alpha^\tau+c_\alpha^\tau$, we obtain the result (\ref{sminch}) from lemma~\ref{crucifix}.
\end{proof}

To tie~$\rhd$ back to majorisation we have the following result.
\begin{proposition}\label{majmcmaj}
Let $P,P^\tau$ be as above. Then
\begin{equation}
\left(P\succ P^\tau\right)  \Rightarrow  \left(P\rhd P^\tau\right).
\end{equation}
Furthermore if $\alpha$ and $\beta$ belong to the same row or column then the two notions of majorisation and~$\rhd$ are the same.
\end{proposition}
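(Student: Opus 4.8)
The plan is to prove $P\succ P^\tau \Rightarrow P\rhd P^\tau$ by reducing the majorisation hypothesis on the marginal vectors to precisely the hypothesis of Proposition~\ref{titrate}, and then to handle the degenerate (same row/column) case separately, where majorisation of one pair of marginals becomes vacuous and the other pair controls everything. Recall that $\alpha>\beta$ as matrix entries, and that under $\tau$ the row containing $\alpha$ loses $\alpha$ and gains $\beta$ (so $r_\alpha^\tau = r_\alpha - (\alpha-\beta) < r_\alpha$), while the row containing $\beta$ gains $(\alpha-\beta)$ (so $r_\beta^\tau = r_\beta + (\alpha-\beta) > r_\beta$), and similarly for the columns. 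Thus the only marginals that change are $r_\alpha, r_\beta$ (to $r_\alpha^\tau, r_\beta^\tau$) and $c_\alpha, c_\beta$ (to $c_\alpha^\tau, c_\beta^\tau$), with the total row sum and total column sum preserved.

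First I would treat the generic case, where $\alpha$ and $\beta$ lie in distinct rows and distinct columns. Here $\mathbf{r}(P)$ and $\mathbf{r}(P^\tau)$ differ only in the two coordinates indexed by the rows of $\alpha$ and $\beta$: passing from $P$ to $P^\tau$ replaces the pair $(r_\alpha, r_\beta)$ by $(r_\alpha - t, r_\beta + t)$ where $t=\alpha-\beta>0$. A transfer of mass $t$ from the $\alpha$-row to the $\beta$-row; for this to be a majorisation $\mathbf{r}(P)\succ\mathbf{r}(P^\tau)$ (mass flowing from larger to smaller, a Robin Hood / Muirhead-type step) one needs $r_\alpha \geq r_\beta + t$, i.e. $r_\alpha^\tau \geq r_\beta$; and symmetrically $\mathbf{c}(P)\succ\mathbf{c}(P^\tau)$ forces $c_\alpha^\tau \geq c_\beta$. (I would note that a single pairwise mass transfer between two coordinates is a majorisation precisely when it does not overshoot — this is a one-line check, or a citation to the $T$-transform characterisation in \cite{bhatia},\cite{marshall}.) So the majorisation hypothesis gives $r_\alpha^\tau \geq r_\beta$ and $c_\alpha^\tau \geq c_\beta$. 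I then invoke the monotonicity of $x\mapsto\mu_H(x,x+t)$ (part (i) of Lemma~\ref{lollipop}): since $r_\alpha^\tau \geq r_\beta$ and $c_\alpha^\tau \geq c_\beta$, every factor in the minuend of~(\ref{wing}) dominates the corresponding factor in the subtrahend, whence~(\ref{wing}) is non-negative and $P\rhd P^\tau$. (Alternatively one can observe that $r_\alpha^\tau\ge r_\beta$ and $c_\alpha^\tau\ge c_\beta$ certainly imply the hypotheses of Proposition~\ref{titrate} — the minimum of~(\ref{titanic}) is then $r_\beta$ or $c_\beta$, and $r_\beta+c_\beta \le r_\alpha^\tau+c_\alpha^\tau$ — so one may simply quote Proposition~\ref{titrate}.)

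For the degenerate case, suppose $\alpha$ and $\beta$ share a row. Then that row's sum is unchanged by $\tau$, so $\mathbf{r}(P)=\mathbf{r}(P^\tau)$ and the hypothesis $\mathbf{r}(P)\succ\mathbf{r}(P^\tau)$ is automatic and empty; all the content is in $\mathbf{c}(P)\succ\mathbf{c}(P^\tau)$, and now the $r_\bullet^\circ$ terms drop out of~(\ref{slump})/(\ref{wing}) (by the proviso noted there), leaving $I(P^\tau)-I(P) = (\alpha-\beta)\log\!\big(\mu_H(c_\alpha^\tau,c_\alpha)/\mu_H(c_\beta,c_\beta^\tau)\big)$. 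The column majorisation forces $c_\alpha^\tau \geq c_\beta$ exactly as above, and monotonicity of $\mu_H(\cdot,\cdot+t)$ gives $\mu_H(c_\alpha^\tau, c_\alpha^\tau+t) \geq \mu_H(c_\beta, c_\beta+t)$, i.e. $I(P^\tau)\geq I(P)$; hence $P\rhd P^\tau$. Conversely, if $P\rhd P^\tau$ in this same-column situation then, since $\rhd$ holds for all choices of matrix entries, specialising to nearly-equal or nearly-collapsed entries forces the corresponding inequality on the $c$-marginals, i.e. $\mathbf{c}(P)\succ\mathbf{c}(P^\tau)$; and $\mathbf{r}(P)=\mathbf{r}(P^\tau)$ trivially. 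So the two notions coincide. The same argument with rows and columns interchanged handles the shared-column subcase.

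The main obstacle I anticipate is the converse half of the last sentence: one must argue that $\rhd$ in the shared-row case actually implies the column majorisation, rather than merely being implied by it. Since $\rhd$ is a uniform statement over all admissible matrix entries, the cleanest route is to exhibit, for any putative failure of $\mathbf{c}(P)\succ\mathbf{c}(P^\tau)$, a concrete assignment of entries making $I(P^\tau)<I(P)$ — essentially using that a pairwise mass transfer which overshoots produces $c_\alpha^\tau < c_\beta$, and then monotonicity of $\mu_H$ runs the inequality the other way. Care is needed to ensure such a configuration is realisable as a genuine probability matrix with the prescribed ordering of entries; in the $2\times n$ setting this is guaranteed by the slack visible in the relation diagram (Figure~\ref{diamonds}), but in full generality one may prefer to state the converse only for fixed $m,n$ where it is checked directly. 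I would flag this and, if a fully general converse is not needed downstream, simply prove the forward implication in complete generality and the equivalence in the same-row/column case by the symmetric-function specialisation argument sketched above.
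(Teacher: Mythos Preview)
Your forward implication is correct and follows the same line as the paper: reduce the majorisation hypothesis to the pairwise conditions $r_\alpha^\tau \geq r_\beta$ and $c_\alpha^\tau \geq c_\beta$ (the paper cites Corollary~II.1.4 of \cite{bhatia} to drop the unchanged coordinates, which is your $T$-transform remark), and then conclude non-negativity of~(\ref{wing}) either by direct monotonicity of $x\mapsto\mu_H(x,x+t)$ or via Proposition~\ref{titrate}. No issue there.

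Where you diverge is in the converse half of the same-row/column equivalence. You propose to argue by specialisation: if column majorisation failed, exhibit an assignment of entries making $I(P^\tau)<I(P)$, contradicting $P\rhd P^\tau$. This would work in principle but, as you correctly flag, it forces you to check that such an assignment is realisable as a genuine probability matrix with the prescribed ordering --- a nuisance you then try to hedge around.

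The paper's route is cleaner and sidesteps this entirely. In the same-row case, with the row terms dropped from~(\ref{wing}), the \emph{strict} monotonicity of $x\mapsto\mu_H(x,x+t)$ from Lemma~\ref{lollipop}(i) gives a \emph{pointwise biconditional}: for any single assignment of values, $I(P^\tau)\geq I(P)$ holds if and only if $c_\alpha^\tau\geq c_\beta$. Since $c_\alpha^\tau+c_\beta^\tau=c_\alpha+c_\beta$, the condition $c_\alpha^\tau\geq c_\beta$ is in turn equivalent to $c_\alpha\geq c_\beta^\tau$, which together with the automatic $c_\alpha>c_\alpha^\tau$ is precisely $(c_\alpha,c_\beta)\succ(c_\alpha^\tau,c_\beta^\tau)$. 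Thus, assignment by assignment, the entropic inequality and the column-majorisation coincide; hence their a~priori versions coincide as well, and the row majorisation is vacuous. No limiting or specialisation argument is needed, and the realisability concern never arises.
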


\begin{proof}
In essence this is just lemma~\ref{majoris} and definition~\ref{critta}; however because the techniques are needed below, we prove it explicitly.

By a standard result on majorisation (see Corollary II.1.4 on p.31 of \cite{bhatia}) since all terms arising from rows or columns not containing $\alpha$ or $\beta$ are identical for both matrices, $P\succ P^\tau$ may be simplified to the statement that (as vectors):
$$(r_\alpha,r_\beta)\succ(r_\alpha^\tau,r_\beta^\tau){\rm\ and\ }
(c_\alpha,c_\beta)\succ(c_\alpha^\tau,c_\beta^\tau).$$
Now $r_\alpha>r_\alpha^\tau$ by definition, and since conversely $r_\beta<r_\beta^\tau$, it follows that
the elements of each set may be ordered as follows:
\begin{equation}\label{fritter}
r_\alpha>\max\{r_\alpha^\tau,r_\beta^\tau\}>\min\{r_\alpha^\tau,r_\beta^\tau\}>r_\beta,
{\rm\ \ and\ \ }
c_\alpha>\max\{c_\alpha^\tau,c_\beta^\tau\}>\min\{c_\alpha^\tau,c_\beta^\tau\}>c_\beta.
\end{equation}

In the simple case where $\alpha$ and $\beta$ belong to the same row, we may set the row terms in (\ref{wing}) to 1, and it becomes apparent that $P\rhd P^\tau$ is exactly the statement that $\mu_H(c_\alpha^\tau,c_\alpha)>\mu_H(c_\beta,c_\beta^\tau)$, which by part (i) of lemma \ref{lollipop} is the same as saying that $c_\alpha^\tau>c_\beta$ (since the quantity $t = y-x = \alpha-\beta$ is the same for both sides). But $c_\alpha^\tau+c_\beta^\tau=c_\alpha+c_\beta$
and so we must have that $c_\alpha>c_\beta^\tau$. Since we already know (as $\alpha>\beta$) that $c_\alpha>c_\alpha^\tau$ it follows that
$$(c_\alpha,c_\beta)\succ(c_\alpha^\tau,c_\beta^\tau)$$
that is to say $P\succ P^\tau$.
Conversely if $P\succ P^\tau$ then plugging $c_\alpha^\tau>c_\beta$ into (\ref{wing}) - again ignoring the row terms - implies $P\rhd P^\tau$.
So in this case it is clear that $P\succ P^\tau$ is the same thing as $P\rhd P^\tau$. An identical argument works for the other simple case where $\alpha$ and $\beta$ belong to the same column.

Finally let us consider the case where $\alpha$ and $\beta$ are in different rows and columns. Suppose that $P\succ P^\tau$; we must show that $P\rhd P^\tau$. But a look at the relationships between row and column sums in (\ref{fritter}) shows that this is a straightforward application of proposition~\ref{titrate}.
\end{proof}

\subsection{Proof of the main theorem}

So far we have constructed an abstract framework for the study of our entropic binary relation $\rhd$; moreover we have shown that it is a necessary condition for majorisation.
For the rest of the paper we specialise to the case of theorem~\ref{biggun}, namely where $m=2$ and $n=3$ and as always $a>b>c>d>e>f>0$. 
In the terminology of definition~\ref{qm} we need to view the permutation in the statement of theorem~\ref{biggun} as a \it matrix\rm, which we shall call
\begin{equation}\label{exxon}
X = \left( \begin{array}{ccc}a & d & e\\
f & c & b\end{array}\right).
\end{equation}

\subsubsection{The canonical matrix class representatives $\mathbf{R_{2\times3}}$}\label{scub}

Recall our original aim: given an ordering of these numbers $\{a,b,c,d,e,f\}$ we wished to establish whether there was an \it a priori \rm permutation which would give us the minimal and/or the maximal possible mutual information.
There are $6!=720$ possible permutations of these elements, giving a set of matrices which we shall refer to throughout as~$\cal{M}_{\rm 2\times3}$. However since simple row and column swaps do not change the CMI, and since there are $12=|\mathbf{S}_3|.|\mathbf{S}_2|$ such swaps, we are reduced to only $60=720/12$ different possible values for the CMI (provided that $\{a,b,c,d,e,f\}$ are all distinct: clearly repeated values within the elements will give rise to fewer possible CMI values). 

For convenience we shall standardize the form of a set of representatives of these~60 CMI-invariant classes of matrices. This set of chosen representatives will be referred to as~$\mathbf{R_{2\times3}}$.
Since we may always make $a$ the top left-hand entry of any of the matrices in~$\cal{M}_{\rm 2\times3}$ by row and/or column swaps, we set a basic form for our matrices as $M = \left( \begin{array}{ccc}a & x & y\\u & v & w\end{array}\right)$, where (as sets) $\{x,y,u,v,w\}=\{b,c,d,e,f\}$.
This leaves us with only $5!=120$ possibilities which we further divide in half by requiring that $x>y$. So our final form for representative matrices will be:
\begin{equation}
M = \left( \begin{array}{ccc}a & x & y\\
u & v & w\end{array}\right),\text{\rm\ with\ }x>y\text{\rm\ and\ }a=\max\{a,x,y,u,v,w\}.
\label{mates}
\end{equation}

This yields our promised 60 representatives~$\mathbf{R_{2\times3}}$ in the form (\ref{mates}) for the 60 possible CMI values associated with the \bf fixed \rm set of probabilities $\{a,b,c,d,e,f\}$. We now need to further subdivide~$\mathbf{R_{2\times3}}$ as follows. Matrices whose rows and columns are arranged in descending order will be said to be in \it standard form\rm. It is straightforward to see that only five of the~60 matrices we have just constructed have this form, namely
\begin{equation}
\label{minz}
\left( \begin{array}{ccc}a & b & c \\d & e & f\end{array}\right),
\left( \begin{array}{ccc}a & b & d \\c & e & f\end{array}\right),
\left( \begin{array}{ccc}a & b & e \\c & d & f\end{array}\right),
\left( \begin{array}{ccc}a & c & d \\b & e & f\end{array}\right),\textrm{ and }
\left( \begin{array}{ccc}a & c & e \\b & d & f\end{array}\right).
\end{equation}
Notice that all of these are in the form (\ref{mates}) with the additional condition that~$u>v>w$.
If we allow the bottom row of any of these to be permuted we obtain~$5=|\mathbf{S}_3|-1$ new matrices which are not in standard form. In all this gives a total of~$30$ matrices split into five groups of~$6$, indexed by each matrix in~(\ref{minz}).

Now consider matrices in~$\mathbf{R_{2\times3}}$ which cannot be in standard form by virtue of having top row entries which are ``too small'' but nevertheless which still have the rows in descending order, viz:
\begin{equation}
\label{minzoneup}
\left( \begin{array}{ccc}a & b & f \\c & d & e\end{array}\right),
\left( \begin{array}{ccc}a & c & f \\b & d & e\end{array}\right),
\left( \begin{array}{ccc}a & d & e \\b & c & f\end{array}\right),
\left( \begin{array}{ccc}a & d & f \\b & c & e\end{array}\right),\textrm{ and }
\left( \begin{array}{ccc}a & e & f \\b & c & d\end{array}\right).
\end{equation}
Once again, by permuting the bottom row of each we obtain five new matrices: a second total of $30$ matrices split into five groups of $6$, indexed by each matrix in (\ref{minzoneup}).

To visualize these subsets of matrices see the diagram on page~\pageref{hexcomb} (together with the classification of $\mathbf{R_{2\times3}}$ set out in appendix~\ref{frog} which is the key to their enumeration).

In order to facilitate the proof of theorem~\ref{biggun}, here are a few results which help us to classify the relations between the~$\mathbf{R_{2\times3}}$ classes. Call two matrices $M = \left( \begin{array}{ccc}a & p & q\\r & s & t\end{array}\right),\ \ N = \left( \begin{array}{ccc}a & x & y\\u & v & w\end{array}\right)$ \it lexicographically ordered \rm if the pair of row vectors $(a,p,q,r,s,t)$ and $(a,x,y,u,v,w)$ is so ordered (ie the word ``apqrst'' would precede the word ``axyuvw'' in an English dictionary).

\begin{lemma}
We may order the matrices in $\mathbf{R_{2\times3}}$ lexicographically, and majorisation respects that ordering.
\end{lemma}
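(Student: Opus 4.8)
The plan is to read the statement as: the lexicographic listing of the sixty canonical representatives is a linear extension of the matrix--majorisation partial order of Definition~\ref{critta} --- understood, as throughout this section, as holding a priori for every admissible assignment of the values $\{a,\dots,f\}$ --- so that $M\succ N$ forces the word of $M$ to precede the word of $N$. The first step reproduces the reduction already used at the start of the proof of Proposition~\ref{majmcmaj}: by Corollary~II.1.4 of~\cite{bhatia}, $M\succ N$ is equivalent to $\mathbf{r}(M)\succ\mathbf{r}(N)$ together with $\mathbf{c}(M)\succ\mathbf{c}(N)$, and since the two row sums of any representative add to $1$ and so do the three column sums, these in turn are equivalent to a short list of scalar inequalities between the relevant partial sums (the larger of the two row sums, the largest column sum, and the smallest column sum). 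Each such inequality is a comparison of sums of pairs of the $\lambda_k$, required to hold for all admissible values, so by Lemma~\ref{trance} it is controlled entirely by the fixed order relations between row and column sums displayed in Figure~\ref{diamonds}.

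I would then argue the contrapositive: assuming the word of $N$ strictly precedes that of $M$, I locate one of these scalar majorisation inequalities that fails. Let $k$ be the first slot at which the two words disagree; since every representative has the form~(\ref{mates}), with $a$ fixed in the top-left slot and the two remaining top-row entries in decreasing order, we have $k\in\{2,\dots,6\}$, and by hypothesis $N$ carries the larger probability in slot $k$. The range $k\ge 4$ is the easy one: there $M$ and $N$ share a top row, so their row marginals coincide and only the columns matter; an elementary sorting step (a special case of Lemma~\ref{trance}: moving the larger of two bottom-row entries under the larger of two top-row entries can only raise the pair of column sums in the majorisation order) fixes, inside each of the ten blocks, exactly how the column-sum vectors of the six matrices are ordered, and one checks that this ordering matches the word order of the bottom rows.

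The substance is the range $k\in\{2,3\}$, where $M$ and $N$ sit in different blocks of the partition~(\ref{minz})--(\ref{minzoneup}) and both marginals can change. My plan is to combine the same sorting step with a finite verification against the explicit enumeration of $\mathbf{R_{2\times3}}$ in appendix~\ref{frog} and the hexagonal diagram on page~\pageref{hexcomb}: there are only finitely many pairs of distinct top rows, for each of them the hypothesis that $N$'s word is the earlier one pins down how the multisets $\{x,y\}$ and $\{x',y'\}$ compare, and one then verifies --- using Lemma~\ref{trance} on the single affected row pair and the two affected column pairs, in the spirit of Propositions~\ref{titrate} and~\ref{majmcmaj} --- that $M$ cannot a priori majorise $N$. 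I expect this cross-block case to be the main obstacle: unlike for plain vector majorisation, the word order of the positions inside a matrix is not transparently aligned with the majorisation order of its marginals, so one genuinely has to walk through the covering relations in appendix~\ref{frog}, confirm that each one moves a larger probability to an earlier slot, and then invoke transitivity of both orders to conclude.
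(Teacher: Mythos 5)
Your overall strategy (contrapositive plus reduction of $M\succ N$ to a priori comparisons of marginal sums) is sound, and the within-block case $k\ge 4$ is essentially Lemma~\ref{treecritter}. One small imprecision there: the majorisation order on the six bottom-row permutations is only a \emph{partial} order (the two ``vertical pairs'' in~(\ref{flea}) are incomparable), so the column-sum vectors are not ``ordered exactly'' in a way that \emph{matches} the word order; what is true, and what you need, is that the lexicographic order is a linear extension of it, so a lex-later matrix can never majorise a lex-earlier one.

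The genuine gap is in the case you yourself flag as the main obstacle, $k\in\{2,3\}$: you leave it as an unexecuted finite verification over all $\binom{5}{2}^{\,2}$-ish pairs of distinct top rows, to be settled by walking through covering relations that the appendix does not actually list. This case is the heart of the lemma, and the paper dispatches it with one observation that your plan is missing: \emph{every} representative carries $a$ in its top row, so if $M\succ N$ holds a priori then in particular for all values with $a>\tfrac12$ the top row of $M$ is the larger row sum of $M$, forcing $a+x+y\ge a+x'+y'$, i.e.\ $x+y\ge x'+y'$, for all admissible values. An a priori comparison of two $2$-element subsets of $\{b,c,d,e,f\}$ can only hold by elementwise domination of the sorted pairs, which is exactly lexicographic precedence (or equality) of the top rows; equality reduces to the within-block case. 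Without either this observation or an actually completed exhaustive check, the cross-block case --- and hence the lemma --- is not proved. The brute-force route would work in principle, but as written it is a plan, not a proof.
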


That is to say, if $M$ lies above $N$ lexicographically then $N$ \bf cannot \rm majorise $M$. Note that this is \bf not \rm the case for the relation $\rhd$. 

\begin{proof}
The existence of such an ordering is obvious; so let $M\succ N$ where $M,N\in\mathbf{R_{2\times3}}$. We need to show that $M$ precedes $N$ in the lexicographical ordering. Looking first at the top row of each matrix: since they both contain $a$, and since a sum containing $a$ can only be \it a priori \rm majorised by another sum containing $a$, it follows that the top row sum of $M$ must be $\geq$ the top row sum of $N$ \it a priori\rm. But both top rows are ordered lexicographically. So the top row of $M$ either precedes that of $N$, in which case we are done; or else the top rows are in fact equal and so we must look at the columns. But this is just the argument of lemma~\ref{treecritter}: see below.
\end{proof}

\begin{lemma}\label{treecritter}
Fix any matrix $M=\left( \begin{array}{ccc}a & x & y\\u & v & w\end{array}\right) \in\mathbf{R_{2\times3}}$ with the additional requirement that $u>v>w$. Permuting the elements of the bottom row under the action of the symmetric group $\mathbf{S}_3$ we have the following majorisation relations:
\begin{equation}
\begin{array}{ccccccc}\label{flea}
\left( \begin{array}{ccc}a & x & y\\u & v & w\end{array}\right)                                 & \succ &
\begin{array}{c}\left( \begin{array}{ccc}a & x & y\\u & w & v\end{array}\right)\\
\left( \begin{array}{ccc}a & x & y\\v & u & w \end{array}\right) \end{array}                     & \succ &
\begin{array}{c}\left( \begin{array}{ccc}a & x & y\\v & w & u \end{array}\right)\\
\left( \begin{array}{ccc}a & x & y\\w & u & v\end{array}\right) \end{array}                     & \succ &
\left( \begin{array}{ccc}a & x & y\\w & v & u\end{array}\right)
\end{array}\end{equation}
There are in general no \rm a priori \it majorisation relations within the two vertical pairs.
\end{lemma}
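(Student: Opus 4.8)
The plan is to reduce the whole statement to an elementary comparison of $3$-vectors. Write $M_\pi$ for the matrix having top row $(a,x,y)$ and bottom row the arrangement $\pi$ of $(u,v,w)$, so that the six matrices of the lemma are $M_{uvw},M_{uwv},M_{vuw},M_{vwu},M_{wuv},M_{wvu}$. The first observation is that \emph{every} $M_\pi$ has the same row-sum vector $\mathbf{r}(M_\pi)=(a+x+y,\,u+v+w)$; hence by Definition~\ref{critta} one has $M_\pi\succ M_{\pi'}$ precisely when the column-sum vectors satisfy $\mathbf{c}(M_\pi)\succ\mathbf{c}(M_{\pi'})$ in $\RR^3$. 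Moreover all the $\mathbf{c}(M_\pi)$ share a common sum, and for vectors $p,q\in\RR^3$ of equal sum one has $p\succ q$ if and only if $\max p\geq\max q$ and $\min p\leq\min q$ (the middle partial sum being forced by the other two). So the lemma is entirely a statement about the $\max$ and $\min$ of the numbers $\{a+\pi_1,\,x+\pi_2,\,y+\pi_3\}$.

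The engine of the proof is a single ``one-swap'' lemma: if $M_{\pi'}$ arises from $M_\pi$ by interchanging the bottom-row entries lying in columns $i<j$, and in $M_\pi$ the \emph{larger} of those two entries sits in column $i$, then $M_\pi\succ M_{\pi'}$. To see this, write $t_1=a>t_2=x>t_3=y$ for the (strictly decreasing) top row and $\beta>\beta'$ for the interchanged entries; then $M_\pi$ and $M_{\pi'}$ agree in every column except $i$ and $j$, whose column sums are the pairs $\{t_i+\beta,\,t_j+\beta'\}$ and $\{t_i+\beta',\,t_j+\beta\}$ respectively. Since $t_i>t_j$ and $\beta>\beta'$, the first pair consists of the largest and the smallest of the four numbers $t_i+\beta,\,t_i+\beta',\,t_j+\beta,\,t_j+\beta'$, so it majorises the second pair (equal sum, larger maximum, smaller minimum). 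Re-inserting the unchanged third column sum $c$ preserves both $\max\geq\max$ and $\min\leq\min$, by monotonicity of $\max(c,\cdot)$ and $\min(c,\cdot)$, and the totals remain equal; hence $\mathbf{c}(M_\pi)\succ\mathbf{c}(M_{\pi'})$. The only hypotheses used are $a>x>y$ (automatic since $M\in\mathbf{R_{2\times3}}$) and $u>v>w$, so the conclusion is genuinely \emph{a priori}.

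Next I would observe that each of the eight covering relations in~(\ref{flea}) --- namely $M_{uvw}\succ M_{uwv}$ and $M_{uvw}\succ M_{vuw}$; the four relations from $\{M_{uwv},M_{vuw}\}$ to $\{M_{vwu},M_{wuv}\}$; and $M_{vwu}\succ M_{wvu}$ and $M_{wuv}\succ M_{wvu}$ --- is an instance of the one-swap lemma, verified by inspecting the two bottom rows. For instance $M_{vuw}\to M_{wuv}$ transposes the entries $v,w$ of columns $1$ and $3$ with $v>w$ sitting in column~$1$; $M_{uwv}\to M_{vwu}$ transposes $u,v$ of columns $1$ and $3$ with $u>v$ in column~$1$; $M_{vuw}\to M_{vwu}$ transposes $u,w$ of columns $2$ and $3$ with $u>w$ in column~$2$; and so on through all eight. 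Since $\succ$ is transitive (being an intersection of vector majorisations, which are transitive), the remaining non-covering relations of the displayed chain, such as $M_{uvw}\succ M_{wvu}$, follow automatically.

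Finally I would show that the two vertical pairs carry no \emph{a priori} majorisation. For $\{M_{uwv},M_{vuw}\}$ this is automatic and in fact stronger: the column sums of $M_{uwv}$ are $a+u,\,x+w,\,y+v$, each of which strictly exceeds $y+w$ (using $a>y$, $u>w$, $x>y$, $v>w$), while their maximum is $a+u$; the column sums of $M_{vuw}$ are $a+v,\,x+u,\,y+w$, whose minimum is $y+w$ and whose maximum is strictly below $a+u$ (as $a+v,\,x+u,\,y+w<a+u$). Hence $\min\mathbf{c}(M_{uwv})>\min\mathbf{c}(M_{vuw})$ forbids $M_{uwv}\succ M_{vuw}$ and $\max\mathbf{c}(M_{vuw})<\max\mathbf{c}(M_{uwv})$ forbids $M_{vuw}\succ M_{uwv}$, so this pair is never comparable. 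For $\{M_{vwu},M_{wuv}\}$ the relation is genuinely spectrum-dependent, so one exhibits two numerical spectra: with $\{a,\dots,f\}=\{0.4,0.3,0.2,0.05,0.03,0.02\}$ arranged so that the three large eigenvalues land in positions $a,u,v$, a direct computation of the column sums gives $M_{vwu}\succ M_{wuv}$ but not conversely, whereas a spectrum in which $a-x$ is tiny while $u-v$ is not (e.g. $\{0.35,0.34,0.15,0.1,0.05,0.01\}$) reverses this; hence neither direction is a priori. The only genuinely delicate point in all of this is this last incomparability, which --- unlike the other vertical pair --- requires the two explicit spectra rather than a uniform argument; everything else is bookkeeping around the one-swap lemma, whose proof must be phrased so that it invokes only the orderings $a>x>y$ and $u>v>w$ and never the unknown interleaving of $x,y$ with $u,v,w$.
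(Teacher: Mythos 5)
Your proof is correct and follows essentially the same route as the paper's (one-line) proof: since the row sums are unchanged, each arrow in the diagram reduces to a single bottom-row transposition affecting only two column sums, and the orderings $a>x>y$, $u>v>w$ force the two-entry (hence three-entry) majorisation. You go further than the paper in actually substantiating the final incomparability claim -- uniformly for the pair $\{(u,w,v),(v,u,w)\}$ and by explicit spectra for $\{(v,w,u),(w,u,v)\}$ -- which the paper merely asserts.
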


\begin{proof}
Noting that only two of the column sums are changed at each step, apply definition~\ref{critta} bearing in mind the assumptions that~$u>v>w$ and $a>x>y$.
\end{proof}

Now the rightmost matrix in~(\ref{flea}) corresponds to left multiplication by the permutation $\varpi=(m_{21},m_{23})$ of the matrix $M=(m_{ij})$, that is:  
$\left( \begin{array}{ccc}a & x & y\\w & v & u\end{array}\right) = \varpi\left( \begin{array}{ccc}a & x & y\\u & v & w\end{array}\right).$
By proposition~\ref{majmcmaj} the fact that $A$ majorises $B$ implies that $I(A)<I(B)$, so the minimal value for the CMI among the \emph{representative} matrices in $\Rtt$ must occur in a matrix of the form on the left-hand side of~(\ref{flea}); conversely the maximum must occur in a matrix of the form on the right-hand side of~(\ref{flea}).

\begin{corollary}\label{bighitter}
There is some $M$ in (\ref{minz}) such that the \bf minimal\it\ value for the CMI of any matrix from the set $\mathbf{R_{2\times3}}$ is given by~$I(M)$.

There is some $A$ in (\ref{minzoneup}) such that the \bf maximal\it\ value for the CMI of any matrix from the set $\mathbf{R_{2\times3}}$ is given by~$I(\varpi(A))$.
\end{corollary}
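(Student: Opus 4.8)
The plan is to combine the majorisation diamond of Lemma~\ref{treecritter} with one organising trick: always compare two probability matrices that have \emph{identical} column sums, so that majorisation of the columns is automatic and every comparison collapses to majorisation of a pair of equal‑total, two‑component row‑sum vectors --- equivalently, to a single inequality between two three‑element partial sums of $a>b>c>d>e>f$, which is always decidable a priori. Recall what the discussion preceding the corollary already supplies: by Lemma~\ref{treecritter} and Proposition~\ref{majmcmaj}, the minimal CMI over $\mathbf{R_{2\times3}}$ is attained at the matrix on the left of some instance of~(\ref{flea}), i.e.\ at one of the five matrices~(\ref{minz}) or one of the five~(\ref{minzoneup}); and the maximal CMI is attained at $\varpi$ applied to one of those ten matrices. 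So for the minimum it suffices to dominate (in CMI) each matrix of~(\ref{minzoneup}) by a matrix of~(\ref{minz}), and for the maximum to dominate each $\varpi(M)$ with $M$ in~(\ref{minz}) by some $\varpi(M')$ with $M'$ in~(\ref{minzoneup}).

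For the minimum I would argue as follows. Given $A$ in~(\ref{minzoneup}), let $S$ be obtained from $A$ by interchanging the two entries in every column of $A$ that is not in descending order. This leaves the column multiset untouched, so $\mathbf{c}(S)=\mathbf{c}(A)$; direct inspection of the five cases shows that $S$ is then literally one of the five standard‑form matrices~(\ref{minz}), and that the larger of the two row sums of $S$ exceeds the larger of the two row sums of $A$, so $\mathbf{r}(S)\succ\mathbf{r}(A)$. Hence $S\succ A$ in the sense of Definition~\ref{critta}, and $I(S)\le I(A)$ by Lemma~\ref{majoris}. The pairings that arise are: the first matrix of~(\ref{minzoneup}) with the third of~(\ref{minz}); the second, third and fourth matrices of~(\ref{minzoneup}) each with the fifth of~(\ref{minz}); and the fifth of~(\ref{minzoneup}) with the fourth of~(\ref{minz}). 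Taking the minimum over $A$ localises the global minimum inside~(\ref{minz}).

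For the maximum I would show that $I(\varpi(M))\le I(X)$ for every $M$ in~(\ref{minz}), where $X$ is the matrix~(\ref{exxon}) --- which is precisely $\varpi$ of the third matrix of~(\ref{minzoneup}); since the global maximum is attained at one of the ten matrices $\varpi(M),\varpi(M')$, this forces it into $\{\,\varpi(M'):M'\text{ in }(\ref{minzoneup})\,\}$, as claimed. Here $\varpi(M)$ for three of the five $M$ in~(\ref{minz}) (those coming from the first, second and fifth matrices there) has exactly the column sums of $X$, and the two‑component row‑sum check gives $\varpi(M)\succ X$, hence $I(\varpi(M))\le I(X)$. The remaining two $\varpi(M)$ (from the third and fourth matrices of~(\ref{minz})) instead share their column sums with the auxiliary matrix $Y$ having top row $(a,d,e)$ and bottom row $(f,b,c)$: one checks $\varpi(M)\succ Y$, and then $Y\succ X$ follows from~(\ref{flea}), since $Y$ and $X$ lie in the same bottom‑row orbit with $Y$ carrying the pattern $(w,u,v)$ and $X$ the pattern $(w,v,u)$. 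Thus $I(\varpi(M))\le I(Y)\le I(X)$ in those two cases as well.

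The work is almost entirely bookkeeping. The only parts needing a moment's thought are: choosing, for each of the ten target matrices, the correct same‑column‑sum partner; noticing in the maximum case that the two recalcitrant $\varpi(M)$'s cannot be compared with $X$ directly and must be routed through the matrix $Y$; and verifying that ``swap the inverted columns'' really does land back inside~(\ref{minz}) in all five minimum cases. Once the partner is fixed, each residual inequality --- for instance $a+b+e\ge\max(a+b+f,\,c+d+e)$ for the first minimum pairing --- is a one‑line consequence of $a>b>c>d>e>f$, so no analytic difficulty arises; in particular, in contrast with Theorem~\ref{biggun} itself, this argument uses neither the finer relation~$\rhd$ nor Proposition~\ref{titrate}, and it deliberately does \emph{not} identify which of the five $\varpi(M')$ realises the maximum.
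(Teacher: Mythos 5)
Your proposal is correct and follows essentially the same route as the paper: reduce via Lemma~\ref{treecritter} to the ten descending\--row candidates, then exhibit explicit matrix majorisations (Definition~\ref{critta}, Lemma~\ref{majoris}) pairing each matrix of~(\ref{minzoneup}) with one of~(\ref{minz}) for the minimum and each $\varpi(M)$ with $X$ for the maximum; your equal\--column\--sum bookkeeping and the detour through $Y$ simply flesh out the checks the paper compresses into ``shown directly from the definitions''. The one inaccuracy is your assertion that the two recalcitrant $\varpi(M)$'s ``cannot be compared with $X$ directly'' --- they can (the paper claims all five of the $\varpi(M)$ majorise $X$ outright, and a short case check on the column partial sums confirms it), though your route through $Y$ and~(\ref{flea}) is equally valid.
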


\begin{proof}
Recall from lemma~\ref{majoris} the relationship between matrices, majorisation and the CMI.
For the minima, one can show directly from the definitions that every matrix in~(\ref{minzoneup}) is majorised by some matrix in~(\ref{minz}), and we then use lemma~\ref{treecritter}.

For the maxima we again use lemma~\ref{treecritter} to reduce the problem to comparing the matrices obtained by applying~$\varpi$ to~(\ref{minz}), with those obtained by applying~$\varpi$ to~(\ref{minzoneup}): then it can be shown once again from the definitions that every matrix in~(\ref{minz}) majorises at least one matrix in~(\ref{minzoneup}). Indeed it may be shown directly that each matrix in~(\ref{minz}) majorises the matrix~$X$ in~(\ref{exxon}).
\end{proof}

\subsubsection*{Aside: the basic majorisation structure in pictures}

Using the simple majorisation relations developed in the foregoing discussion we have established a kind of ``honeycomb'' which is the backbone of the partial order which is elaborated upon in~\cite{mcmaj}. Figure~\ref{hexcomb} shows the basic hexagonal frames corresponding to the majorisation orderings in~(\ref{flea}). The honeycomb consists of~10 hexagons each containing~6 matrices (one row of~5 slightly below the other reflecting the ``standard'' classification), with each matrix linked via a hexagonal pattern to the other matrices in its own group.
Each hexagonal cell is in itself a diagram of the Bruhat order on $\mathbf{S}_3$. The 2 sets of 5 hexagons come from lemma~\ref{treecritter}; and the 12 lines of 5 matrices each (consisting of aligned vertices of the hexagons in their respective groupings) arise from variants of~(\ref{minz}) and~(\ref{minzoneup}). The red numbers represent the ``major'' element in each hexagon and are in fact all of the matrices in~(\ref{minz}) for the top row, and~(\ref{minzoneup}) for the bottom row. Note that we have placed the maximal CMI element $\mathbf{48}$ at the very bottom point, reflecting the fact that it is below every other matrix in the partial order induced by~$\rhd$. The minimal CMI will occur for a matrix on the very top row (matrices $\mathbf{1},\mathbf{7},\mathbf{13},\mathbf{25}$ or $\mathbf{31}$).

The numbering is as per appendix~\ref{frog}, ie the lexicographic ordering.
We have stuck to this ordering as much as possible in the diagram itself, trying to increase numbers within the hexagons as we move down and from left to right; however in places we have changed it slightly so that the patterns are rendered more clearly. The black arrows represent the majorisation relations in lemma~\ref{treecritter} which arise within each hexagon. 

The light blue double-headed arrows represent the action of the inner automorphism $\xi_\omega$ arising from the unique element $\omega=(1,6)(2,5)(3,4)\in\Sym_6$ of maximal length (viewing $\Sym_6$ as a Coxeter group - see~\cite{bjorner} for definitions) which flips 22 pairs of matrix classes and fixes the remaining 16. Since this automorphism respects the relation $\rhd$ on $\Rtt$ it follows that any entropic relations (including of course majorisation) involving the nodes which have a blue arrow pointing to them will occur in pairs, thus considerably simplifying the structure. We explore this in~\cite{mcmaj}.

\begin{figure}[h!btp]
\begin{center}
\mbox{
\subfigure{\includegraphics[width=6in,height=6in,keepaspectratio]{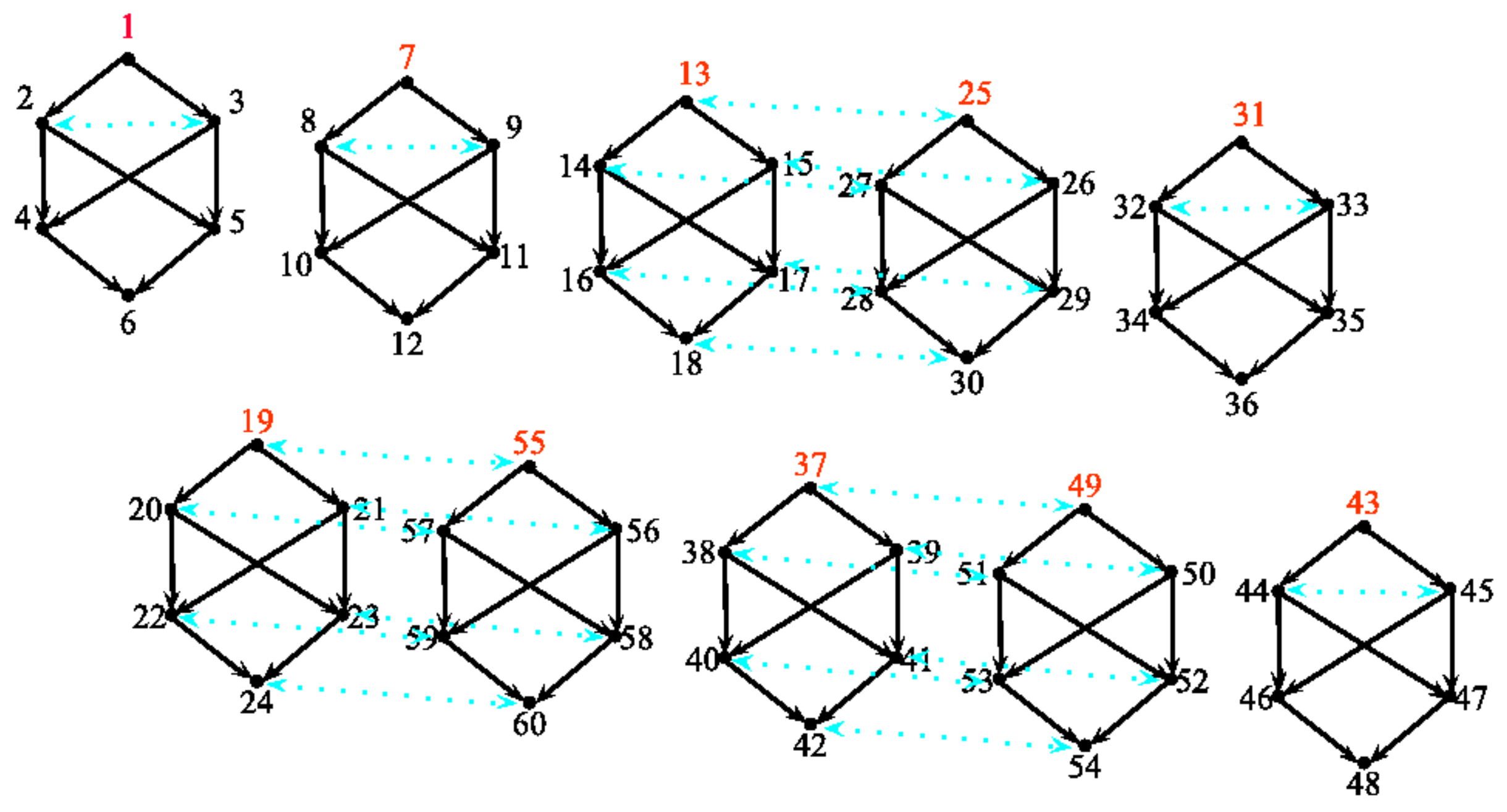}}%{hexagons_UniqMax_cropped.pdf}}
}
\end{center}\caption{Representation of the most basic horizontal-transposition-based majorisation relations on $\Rtt$ together with the action of the involution $\xi$ (the blue arrows)}
\label{hexcomb}
\end{figure}

\subsubsection{Completion of the proof of theorem~\ref{biggun}}

We are reduced by corollary~\ref{bighitter} to showing that the CMI of the matrix $X$ is greater than or equal to that of the other $4$ matrices obtained by applying $\varpi$ to the remainder of (\ref{minzoneup}). We state these matrices for convenience:
\begin{equation}\label{minzoneupmod}
Y_1 = \left( \begin{array}{ccc}a & b & f \\e & d & c\end{array}\right),\ \
Y_2 = \left( \begin{array}{ccc}a & c & f \\e & d & b\end{array}\right),\ \
Y_3 = \left( \begin{array}{ccc}a & d & f \\e & c & b\end{array}\right),\ \
Y_4 = \left( \begin{array}{ccc}a & e & f \\d & c & b\end{array}\right).
\end{equation}
We should remark first that there is no \it a priori \rm majorisation relationship between any of these matrices $X,Y_1,Y_2,Y_3,Y_4$. So we need a weaker (easier to satisfy) condition which distinguishes between them, namely the entropic relation~$\rhd$.

Since $P\rhd Q$ implies that the CMI of $P$ is lower than that of $Q$, it is a transitive relation on CMI of matrices and so it will suffice to show the following relations:
\begin{eqnarray*}
(i)\ Y_1 & \rhd & Y_2 \\
(ii)\ Y_2 & \rhd & X  \\
(iii)\ Y_4 & \rhd & Y_3 \\
(iv)\ Y_3 & \rhd & X,
\end{eqnarray*}

For all of these we just apply proposition~\ref{titrate} as follows.

(i) Let $P=Y_1$, $\alpha=b$, $\beta=c$ and $\tau$ swaps entries $p_{12}$ and $p_{23}$ (ie $\alpha$ and $\beta$), yielding $P^\tau = Y_2$. Using the definitions we see that $r_\alpha^\tau=a+c+f$, $c_\alpha^\tau=c+d$, $r_\beta=c+d+e$ and $c_\beta=c+f$. So indeed $c_\beta$ is always the minimal value of these four, and we see that in addition the hypothesis that $r_\beta+c_\beta \leq r_\alpha^\tau+c_\alpha^\tau$ is satisfied. So $P\rhd P^\tau$ by proposition~\ref{titrate}.

(iii) This time let $P=Y_4$, $\alpha=d$, $\beta=e$ and $\tau$ swaps entries $p_{12}$ and $p_{21}$, yielding $P^\tau = Y_3$. Again using the definitions: $r_\alpha^\tau=b+c+e$, $c_\alpha^\tau=a+e$, $r_\beta=a+e+f$ and $c_\beta=c+e$. So once again $c_\beta$ is always the minimal value of these four, and the hypothesis that $r_\beta+c_\beta \leq r_\alpha^\tau+c_\alpha^\tau$ is again satisfied. So $P\rhd P^\tau$ by proposition~\ref{titrate}.

(iv) Now $P=Y_3$, $\alpha=e$, $\beta=f$ and $\tau$ swaps entries $p_{13}$ and $p_{21}$, yielding $P^\tau = X$.
We have $r_\alpha^\tau=b+c+f$, $c_\alpha^\tau=a+f$, $r_\beta=a+d+f$ and $c_\beta=b+f$. Again $c_\beta$ is always the minimal value of these four, and the hypothesis that $r_\beta+c_\beta \leq r_\alpha^\tau+c_\alpha^\tau$ is satisfied. So $P\rhd P^\tau$ by proposition~\ref{titrate}.

(ii) Finally the slightly trickier case of proving $P=Y_2 \rhd X$. The reason this is different is that on the face of it, it does not consist of a single transposition but rather of a product of two disjoint transpositions $(p_{12},p_{22})(p_{13},p_{21})$ for which the intermediate matrices have no \it a priori \rm relations with one another. However we may use the $\rhd$-relational framework above if we observe that in fact the single transposition $\tau=(p_{11},p_{23})$ with $\alpha=a$ and $\beta=b$ gives $P^\tau=\left( \begin{array}{ccc}b & c & f \\e & d & a\end{array}\right)$ which is seen to have CMI equal to that of $X$. Since there is nothing in the definition of~$\rhd$ which requires a matrix to be in $\mathbf{R_{2\times3}}$ (this is merely a convenient classification for keeping track of them), we may apply the same techniques as in (i), (iii) and (iv) to conclude that $r_\alpha^\tau=b+c+f$, $c_\alpha^\tau=b+e$, $r_\beta=b+d+e$ and $c_\beta=b+f$: so $c_\beta$ is always the minimal value
and since $b+f+b+d+e<b+e+b+c+f$ we are again able to apply proposition~\ref{titrate} to conclude that the quantity $I(X)-I(Y_2)=I(P^\tau)-I(P)$ is positive, as required.

This completes the proof of theorem~\ref{biggun}.\qed

\begin{remark}\label{ream}
It is worth pointing out that one may arrive at the conclusion of theorem~\ref{biggun} by a process of heuristic reasoning, as follows. Recall from definition~\ref{cmile} that the CMI consists of three components, of which the last one is identical for all matrices which are permutations of one another. So in order to understand maxima/minima we restrict our focus to the first two terms, namely the entropies of the marginal probability vectors. Now entropy is a measure of the ``randomness'' of the marginal probabilities: the more uniform they are the higher will be the contribution to the CMI from these row and column sum vectors. Beginning with the columns, if we look at the a priori ordering $a>b>c>d>e>f$ it is evident that the most uniform way of selecting pairs in general so as to be as close as possible to one another would be to begin at the outside and work our way in: namely the column sum vector should read $(a+f,\ b+e,\ c+d)$. Similarly for the row sums: we need to add small terms to $a$, but the position of $f$ is already taken in the same column as $a$, so that just leaves $d$ and $e$ in the top row, and $c$ and $b$ fill up the bottom row in the order dictated by the column sums. We perform a similar analysis for the simpler case of 2x2 matrices in appendix~\ref{2by2}, where in fact we can achieve a total ordering by the same method.
\end{remark}

\section*{Acknowledgements}
We would like to thank Terry Rudolph and the QOLS department of Imperial College London for their generous hospitality for the duration of this work. DJ~is supported by the Royal Commission for the Exhibition of 1851.

\newpage
\appendix
\section{The matrix class representatives in $\mathbf{R_{2\times3}}$}\label{frog}

We list the matrix representatives in~$\mathbf{R_{2\times3}}$ in lexicographic order together with the enumeration we have used throughout the paper when referring to them, alongside in each case the element~$\sigma\in G=\Ssix$ in cycle notation which represents the appropriate permutation of the fiducial matrix~$\left( \begin{array}{ccc}a & b & c \\d & e & f\end{array}\right)$ which we have chosen to represent the identity~$()\in G$. Note that each~$\sigma$ is only chosen up to row- and column-swaps. Also, since we have chosen to represent the matrices with~$a$ in the top left-hand corner and with decreasing top row, the set of representative cycles displayed is effectively a copy of~$\Sfive$ modulo a subgroup of order~2.\\

\begin{tabular}{| l l l | l l l |l l l | l l l |}
\hline
 $\mathbf{1}:$ & $\left( \begin{array}{ccc}a & b & c\\d & e & f\end{array}\right)$, & $\mathbf{()}$ &
 $\mathbf{2}:$ & $\left( \begin{array}{ccc}a & b & c\\d & f & e\end{array}\right)$, & $\mathbf{(56)}$ &
 $\mathbf{3}:$ & $\left( \begin{array}{ccc}a & b & c\\e & d & f\end{array}\right)$, & $\mathbf{(45)}$ & 
 $\mathbf{4}:$ & $\left( \begin{array}{ccc}a & b & c\\e & f & d\end{array}\right)$, & $\mathbf{(465)}$ \\ \hline 
 $\mathbf{5}:$ & $\left( \begin{array}{ccc}a & b & c\\f & d & e\end{array}\right)$, & $\mathbf{(456)}$ & 
 $\mathbf{6}:$ & $\left( \begin{array}{ccc}a & b & c\\f & e & d\end{array}\right)$, & $\mathbf{(46)}$ &
 $\mathbf{7}:$ & $\left( \begin{array}{ccc}a & b & d\\c & e & f\end{array}\right)$, & $\mathbf{(34)}$ & 
 $\mathbf{8}:$ & $\left( \begin{array}{ccc}a & b & d\\c & f & e\end{array}\right)$, & $\mathbf{(34)(56)}$ \\ \hline  
 $\mathbf{9}:$ & $\left( \begin{array}{ccc}a & b & d\\e & c & f\end{array}\right)$, & $\mathbf{(354)}$ & 
 $\mathbf{10}:$ & $\left( \begin{array}{ccc}a & b & d\\e & f & c\end{array}\right)$, & $\mathbf{(3654)}$ & 
 $\mathbf{11}:$ & $\left( \begin{array}{ccc}a & b & d\\f & c & e\end{array}\right)$, & $\mathbf{(3564)}$ & 
 $\mathbf{12}:$ & $\left( \begin{array}{ccc}a & b & d\\f & e & c\end{array}\right)$, & $\mathbf{(364)}$ \\ \hline
 $\mathbf{13}:$ & $\left( \begin{array}{ccc}a & b & e\\c & d & f\end{array}\right)$, & $\mathbf{(345)}$ & 
 $\mathbf{14}:$ & $\left( \begin{array}{ccc}a & b & e\\c & f & d\end{array}\right)$, & $\mathbf{(3465)}$ & 
 $\mathbf{15}:$ & $\left( \begin{array}{ccc}a & b & e\\d & c & f\end{array}\right)$, & $\mathbf{(35)}$ & 
 $\mathbf{16}:$ & $\left( \begin{array}{ccc}a & b & e\\d & f & c\end{array}\right)$, & $\mathbf{(365)}$ \\ \hline 
 $\mathbf{17}:$ & $\left( \begin{array}{ccc}a & b & e\\f & c & d\end{array}\right)$, & $\mathbf{(35)(46)}$ & 
 $\mathbf{18}:$ & $\left( \begin{array}{ccc}a & b & e\\f & d & c\end{array}\right)$, & $\mathbf{(3645)}$ &
 $\mathbf{19}:$ & $\left( \begin{array}{ccc}a & b & f\\c & d & e\end{array}\right)$, & $\mathbf{(3456)}$ & 
 $\mathbf{20}:$ & $\left( \begin{array}{ccc}a & b & f\\c & e & d\end{array}\right)$, & $\mathbf{(346)}$ \\ \hline 
 $\mathbf{21}:$ & $\left( \begin{array}{ccc}a & b & f\\d & c & e\end{array}\right)$, & $\mathbf{(356)}$ & 
 $\mathbf{22}:$ & $\left( \begin{array}{ccc}a & b & f\\d & e & c\end{array}\right)$, & $\mathbf{(36)}$ & 
 $\mathbf{23}:$ & $\left( \begin{array}{ccc}a & b & f\\e & c & d\end{array}\right)$, & $\mathbf{(3546)}$ & 
 $\mathbf{24}:$ & $\left( \begin{array}{ccc}a & b & f\\e & d & c\end{array}\right)$, & $\mathbf{(36)(45)}$ \\ \hline 
 $\mathbf{25}:$ & $\left( \begin{array}{ccc}a & c & d\\b & e & f\end{array}\right)$, & $\mathbf{(243)}$ & 
 $\mathbf{26}:$ & $\left( \begin{array}{ccc}a & c & d\\b & f & e\end{array}\right)$, & $\mathbf{(243)(56)}$ & 
 $\mathbf{27}:$ & $\left( \begin{array}{ccc}a & c & d\\e & b & f\end{array}\right)$, & $\mathbf{(2543)}$ & 
 $\mathbf{28}:$ & $\left( \begin{array}{ccc}a & c & d\\e & f & b\end{array}\right)$, & $\mathbf{(26543)}$  \\ \hline 
 $\mathbf{29}:$ & $\left( \begin{array}{ccc}a & c & d\\f & b & e\end{array}\right)$, & $\mathbf{(25643)}$ & 
 $\mathbf{30}:$ & $\left( \begin{array}{ccc}a & c & d\\f & e &b\end{array}\right)$, & $\mathbf{(2643)}$ &
 $\mathbf{31}:$ & $\left( \begin{array}{ccc}a & c & e\\b & d & f\end{array}\right)$, & $\mathbf{(2453)}$ & 
 $\mathbf{32}:$ & $\left( \begin{array}{ccc}a & c & e\\b & f & d\end{array}\right)$, & $\mathbf{(24653)}$ \\ \hline 
 $\mathbf{33}:$ & $\left( \begin{array}{ccc}a & c & e\\d & b & f\end{array}\right)$, & $\mathbf{(253)}$ & 
 $\mathbf{34}:$ & $\left( \begin{array}{ccc}a & c & e\\d & f & b\end{array}\right)$, & $\mathbf{(2653)}$ & 
 $\mathbf{35}:$ & $\left( \begin{array}{ccc}a & c & e\\f & b & d\end{array}\right)$, & $\mathbf{(253)(46)}$ & 
 $\mathbf{36}:$ & $\left( \begin{array}{ccc}a & c & e\\f & d &b\end{array}\right)$, & $\mathbf{(26453)}$ \\ \hline
 $\mathbf{37}:$ & $\left( \begin{array}{ccc}a & c & f\\b & d & e\end{array}\right)$, & $\mathbf{(24563)}$ & 
 $\mathbf{38}:$ & $\left( \begin{array}{ccc}a & c & f\\b & e & d\end{array}\right)$, & $\mathbf{(2463)}$ & 
 $\mathbf{39}:$ & $\left( \begin{array}{ccc}a & c & f\\d & b & e\end{array}\right)$, & $\mathbf{(2563)}$ & 
 $\mathbf{40}:$ & $\left( \begin{array}{ccc}a & c & f\\d & e & b\end{array}\right)$, & $\mathbf{(263)}$  \\ \hline 
 $\mathbf{41}:$ & $\left( \begin{array}{ccc}a & c & f\\e & b & d\end{array}\right)$, & $\mathbf{(25463)}$ & 
 $\mathbf{42}:$ & $\left( \begin{array}{ccc}a & c & f\\e & d &b\end{array}\right)$, & $\mathbf{(263)(45)}$ & 
 $\mathbf{43}:$ & $\left( \begin{array}{ccc}a & d & e\\b & c & f\end{array}\right)$, & $\mathbf{(24)(35)}$ & 
 $\mathbf{44}:$ & $\left( \begin{array}{ccc}a & d & e\\b & f & c\end{array}\right)$, & $\mathbf{(24)(365)}$ \\ \hline 
 $\mathbf{45}:$ & $\left( \begin{array}{ccc}a & d & e\\c & b & f\end{array}\right)$, & $\mathbf{(2534)}$ & 
 $\mathbf{46}:$ & $\left( \begin{array}{ccc}a & d & e\\c & f & b\end{array}\right)$, & $\mathbf{(26534)}$ & 
 $\mathbf{47}:$ & $\left( \begin{array}{ccc}a & d & e\\f & b & c\end{array}\right)$, & $\mathbf{(25364)}$ & 
 $\mathbf{48}:$ & $\left( \begin{array}{ccc}a & d & e\\f & c &b\end{array}\right)$, & $\mathbf{(264)(35)}$ \\ \hline 
 $\mathbf{49}:$ & $\left( \begin{array}{ccc}a & d & f\\b & c & e\end{array}\right)$, & $\mathbf{(24)(356)}$ & 
 $\mathbf{50}:$ & $\left( \begin{array}{ccc}a & d & f\\b & e & c\end{array}\right)$, & $\mathbf{(24)(36)}$ & 
 $\mathbf{51}:$ & $\left( \begin{array}{ccc}a & d & f\\c & b & e\end{array}\right)$, & $\mathbf{(25634)}$ & 
 $\mathbf{52}:$ & $\left( \begin{array}{ccc}a & d & f\\c & e & b\end{array}\right)$, & $\mathbf{(2634)}$  \\ \hline 
 $\mathbf{53}:$ & $\left( \begin{array}{ccc}a & d & f\\e & b & c\end{array}\right)$, & $\mathbf{(254)(36)}$ & 
 $\mathbf{54}:$ & $\left( \begin{array}{ccc}a & d & f\\e & c &b\end{array}\right)$, & $\mathbf{(26354)}$ & 
 $\mathbf{55}:$ & $\left( \begin{array}{ccc}a & e & f\\b & c & d\end{array}\right)$, & $\mathbf{(24635)}$ & 
 $\mathbf{56}:$ & $\left( \begin{array}{ccc}a & e & f\\b & d & c\end{array}\right)$, & $\mathbf{(245)(36)}$  \\ \hline 
 $\mathbf{57}:$ & $\left( \begin{array}{ccc}a & e & f\\c & b & d\end{array}\right)$, & $\mathbf{(25)(346)}$ & 
 $\mathbf{58}:$ & $\left( \begin{array}{ccc}a & e & f\\c & d & b\end{array}\right)$, & $\mathbf{(26345)}$ & 
 $\mathbf{59}:$ & $\left( \begin{array}{ccc}a & e & f\\d & b & c\end{array}\right)$, & $\mathbf{(25)(36)}$ & 
 $\mathbf{60}:$ & $\left( \begin{array}{ccc}a & e & f\\d & c &b\end{array}\right)$, & $\mathbf{(2635)}$ \\ \hline 
\end{tabular}

\section{The $2\times2$ case}\label{2by2}

We set out here a detailed proof of the phenomenon of maximal and minimal CMI in the case of $2\times2$-matrices, which was first proven in~\cite{santerdav} and~\cite{santerdavPRL}. We adopt a quite different approach, more direct in some sense than going via majorisation theory, because it gives some insight into what is really going on in the $2\times3$-case.

Let $a>b>c>d>0$ with $a+b+c+d=1$ and let
$M = \left( \begin{array}{cc}a & b\\ c& d\end{array}\right)$
be the corresponding $2\times2$ probability matrix. 
Define the (classical) mutual information $I(M)$ as before (though in a slightly different but equivalent form to definition~\ref{cmile}) to be
$$I(M) = h(a+b)+h(a+c)-\sum_{x=a,b,c,d}-x\log x,$$
where $h(x)$ is the standard binary entropy function
$$h(x) = -x\log x - (1-x)\log(1-x) = H(x)+H(1-x),$$
defined for $ x\in [0,1]$. Again, we wish to establish whether there is a permutation of the elements of $M$ which gives us \it a priori \rm the minimal or maximal possible mutual information.

In analogy with the $2\times3$ situation above, we may consider the action of~$\mathbf{S}_4$ on the matrix~$M$, denoting the places of the matrix by~$\left( \begin{array}{cc}1 & 2\\ 3& 4\end{array}\right)$. Denote by $\cal{M}_{\rm 2\times2}$ the space of all 24 possible permutations of the matrix $M$ (for a fixed choice of $a,b,c,d$). We may observe once again that the CMI of $M$ is invariant under a large subgroup $J$ of $\mathbf{S}_4$. Namely it is unchanged if we swap the rows, swap the columns or transpose the matrix. Given our choice of numbering the generators of our subgroup of $\mathbf{S}_4$ are canonically the elements $(1,3)(2,4)$, $(1,2)(3,4)$ and $(2,3)$ respectively. The row and column swap operations commute with one another, but the transpose operation $(2,3)$ causes our subgroup $J$ to be a copy of the dihedral group $D_8$. Explicitly:
$$J=\{(),(1,2,4,3),(1,4)(2,3),(1,3,4,2),(1,2)(3,4),(1,3)(2,4),(1,4),(2,3)\},$$
and so we may write the right coset space as
$$\mathbf{S}_4/J=\{()J,(2,4)J,(3,4)J\},$$
viewing the action of $\mathbf{S}_4$ as being via right translation, which in turn 
corresponds to a set of matrix representatives of each class respectively as
$$M=\left( \begin{array}{cc}a & b\\ c& d\end{array}\right),\ M^{(2,4)}=\left( \begin{array}{cc}a & d\\ c& b\end{array}\right)\ {\rm and}\ M^{(3,4)}=\left( \begin{array}{cc}a & b\\ d& c\end{array}\right).$$

In the $2\times3$ case we were able to find a unique maximum and 5 possible minima. In this much simpler case we can in fact order all three right coset classes \it a priori.\rm

\begin{proposition}\label{2free}
With notation as above, 
$$I(M)<I(M^{(3,4)})<I(M^{(2,4)}).$$
\end{proposition}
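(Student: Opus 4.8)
The plan is to reduce everything to elementary monotonicity properties of the binary entropy function $h$. First I would note that, just as in the displayed definition of $I$ for a $2\times2$ matrix, the term $\sum_{x\in\{a,b,c,d\}}-x\log x$ is unaffected by permuting the entries, and that for a $2\times2$ probability matrix the sum of the entropies of the two row marginals equals $h$ of the top row sum (the two row sums add to $1$ and $h(x)=h(1-x)$), and similarly for the two column marginals. Hence for each of the three representatives $N$ one has $I(N)=h\big(\rho_1(N)\big)+h\big(\kappa_1(N)\big)-C$, where $\rho_1(N)$ is the top row sum, $\kappa_1(N)$ the left column sum, and $C$ is one and the same constant for all three. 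Reading off the marginals gives $I(M)=h(a+b)+h(a+c)-C$, $I(M^{(3,4)})=h(a+b)+h(a+d)-C$ and $I(M^{(2,4)})=h(a+d)+h(a+c)-C$, so that $I(M^{(3,4)})-I(M)=h(a+d)-h(a+c)$ and $I(M^{(2,4)})-I(M^{(3,4)})=h(a+c)-h(a+b)$. The proposition is therefore equivalent to the two scalar inequalities $h(a+c)<h(a+d)$ and $h(a+b)<h(a+c)$.

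Next I would record the fact about $h$ that I need: since $h(x)=h(1-x)$ with $h$ strictly increasing on $[0,\frac12]$ and strictly decreasing on $[\frac12,1]$, we have $h(u)<h(v)\iff |u-\frac12|>|v-\frac12|$ for $u,v\in[0,1]$, and moreover $|u-\frac12|=\max(u,1-u)-\frac12$. From $a>b>c>d$ and $a+b+c+d=1$ one extracts the a priori facts $a+b>\frac12$, $a+c>\frac12$ (because $a+b>c+d$ and $a+c>b+d$), $a+b>a+c$, and $1-(a+d)=b+c$. Then $h(a+b)<h(a+c)$ is immediate: both arguments exceed $\frac12$, where $h$ is strictly decreasing, and $a+b>a+c$. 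For $h(a+c)<h(a+d)$ I would write $|a+d-\frac12|=\max(a+d,\,b+c)-\frac12$ and use $a+c>a+d$ (since $c>d$) together with $a+c>b+c$ (since $a>b$) to get $a+c-\frac12>\max(a+d,b+c)-\frac12=|a+d-\frac12|$, hence $h(a+c)<h(a+d)$. Chaining the two inequalities gives $I(M)<I(M^{(3,4)})<I(M^{(2,4)})$, with strictness inherited from the strict eigenvalue inequalities.

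The one point that needs care — and the thing I would flag as the main obstacle — is the second inequality $h(a+c)<h(a+d)$: the value $a+d$ may lie on either side of $\frac12$, so one cannot simply invoke monotonicity of $h$ on a single half-interval. The remedy is to express $|a+d-\frac12|$ through the complementary marginal $b+c$, which turns the comparison into the two a priori inequalities $a+c>a+d$ and $a+c>b+c$; this is precisely the $2\times2$ shadow of the ``pair the extremes and work inwards'' heuristic behind the $2\times3$ argument, which is why this appendix is worth including. I would also remark that, unlike the step $M\rhd M^{(3,4)}$ (a within-row transposition, for which Lemma~\ref{majoris} already gives $I(M)\le I(M^{(3,4)})$ since the row marginals agree and $\mathbf{c}(M)\succ\mathbf{c}(M^{(3,4)})$), the comparison between $M^{(3,4)}$ and $M^{(2,4)}$ is \emph{not} a matrix majorisation relation in either direction — the row marginals and the column marginals point opposite ways — so here the finer scalar computation, exploiting that $h$ is a strictly decreasing function of the distance to $\frac12$, is genuinely what does the work.
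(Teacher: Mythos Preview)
Your proof is correct and follows essentially the same line as the paper's: both reduce to the pair of scalar comparisons $h(a+d)>h(a+c)>h(a+b)$ via the ``distance from $\tfrac12$'' characterisation of $h$, noting that $a+b,\,a+c>\tfrac12$ while $a+d$ may lie on either side. The only cosmetic difference is that where the paper bounds $|a+d-\tfrac12|=\tfrac12|a+d-b-c|$ by the triangle inequality, you compute it exactly as $\max(a+d,\,b+c)-\tfrac12$ and compare directly with $a+c$; this is if anything slightly cleaner, and your closing remark that $M^{(3,4)}$ and $M^{(2,4)}$ are incomparable under the matrix majorisation of Definition~\ref{critta} is a worthwhile addition.
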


\begin{remark}
Recall the remark on page~\pageref{ream} after the proof of theorem~\ref{biggun}: it is possible to arrive at the conclusion of proposition~\ref{2free} by heuristic reasoning as follows. Following the method there we focus solely on the entropies of the marginal probability vectors (the row and column sum vectors in the text). Uniformity in these will yield higher entropies, hence for the maximum we should seek to have the row and column sum vectors each as near to $(0.5,0.5)$ as possible. Clearly this will occur in general when we add $a$ to $d$ and $b$ to $c$; however this cannot occur for both rows and sums so the next best thing is to have $a+c$ and $b+d$. Hence the maximal CMI will occur for the matrix~$\left( \begin{array}{cc}a & d\\ c& b\end{array}\right)$. By similar reasoning the minimum must occur for the least uniform sums, namely $a+b$ with $c+d$ and then $a+c$ and $b+d$ leading to the minimal CMI occurring for~$\left( \begin{array}{cc}a & b\\ c& d\end{array}\right)$. This leaves the middle value for the remaining matrix, which has of course the maximum-entropy set $(a+d,\ b+c)$ together with the minimum-entropy set $(a+b,\ c+d)$.
\end{remark}

\begin{proof}
From the discussion above it follows that the function $I$ takes on at most three distinct values on the orbit of $M$ under the action of $\mathbf{S_4}$. These three values are $I(M)$, $I(M^{(2,4)})$ and $I(M^{(3,4)})$. So we have in the notation introduced above,
$$I(M^{(2,4)}) = h(a+d)+h(a+c)-\sum_{x=a,b,c,d}-x\log x,$$
and
$$I(M^{(3,4)}) = h(a+b)+h(a+d)-\sum_{x=a,b,c,d}-x\log x.$$
Hence in order to prove the proposition we may simply consider the differences
$$I(M^{(3,4)})-I(M) = h(a+d)-h(a+c)$$
and
$$I(M^{(2,4)})-I(M^{(3,4)}) = h(a+c)-h(a+b).$$
The claim of the proposition is that both of these quantities are non-negative.

Here we need three basic properties of the function $h(x)$ on its domain of definition (the unit interval), namely that it is continuous, symmetric about the line $x=\frac{1}{2}$ and monotonic decreasing on either side of that line (moving always in the direction away from the central point of course). Note that this is weaker than needing concavity and maxima from calculus.

Given these three conditions, the size of $h(x)$ versus $h(y)$ for $x,y\in[0,1]$ is measured precisely by how close each of $x$ and $y$ is to the point $x=\frac{1}{2}$. In other words, if $|x-\frac{1}{2}|<|y-\frac{1}{2}|$ then $h(x)>h(y)$. Hence we are reduced to showing that
\begin{equation}\label{smd}
|a+d-\frac{1}{2}|\ <\ |a+c-\frac{1}{2}|\ <\ |a+b-\frac{1}{2}|.
\end{equation}
By the ordering $a>b>c>d$ and the fact that $a+b+c+d=1$, both
$$a+b>\frac{1}{2},\ a+c>\frac{1}{2}.$$

Notice that $a+d$ may be either side of $\frac{1}{2}$; however
\begin{equation}\label{glomp}
|a+d-\frac{1}{2}|=\frac{1}{2}|2a+2d-1|=\frac{1}{2}|a+d-b-c|\leq|\frac{a-c}{2}|+|\frac{b-d}{2}|=\frac{a-c}{2}+\frac{b-d}{2}
\end{equation}
by the triangle inequality, noting for the last equality that $a\geq c$ and $b\geq d$ by assumption. By symmetry we may also write this as:
\begin{equation}\label{glump}
|a+d-\frac{1}{2}|\leq\frac{a-b}{2}+\frac{c-d}{2}.
\end{equation}

Now using the facts about $a,b,c,d$ once again we have:
\begin{equation}\label{teeny}
|a+c-\frac{1}{2}|=a+c-\frac{1}{2}=\frac{1}{2}(2a+2c-1)=\frac{1}{2}(a+c-b-d)=\frac{a-b}{2}+\frac{c-d}{2}
\end{equation}
and
\begin{equation}\label{weeny}
|a+b-\frac{1}{2}|=a+b-\frac{1}{2}=\frac{1}{2}(2a+2b-1)=\frac{1}{2}(a+b-c-d)=\frac{a-c}{2}+\frac{b-d}{2},
\end{equation}
which combined with~(\ref{glomp}) and~(\ref{glump}) prove directly that $|a+d-\frac{1}{2}|\ <\ |a+c-\frac{1}{2}|$ and that $|a+d-\frac{1}{2}|\ <\ |a+b-\frac{1}{2}|$. But it is clear moreover from the fact that $a>b>c>d$ that (\ref{teeny})$<$(\ref{weeny}), which proves the rest of the inequality in~(\ref{smd}).
\end{proof}

\end{document}